\newcommand{\mb}{\mathbb}
\newcommand{\mc}{\mathcal}
\newcommand{\tbf}{\textbf}
\newcommand{\tsf}{\textsf}
\newcommand{\ttt}{\texttt}
\newcommand{\mbf}{\mathbf}
\newcommand{\mtt}{\mathtt}
\newcommand{\n}{\enspace}
\newcommand{\tx}{\text}
\newcommand{\ol}{\overline}
\newcommand{\supp}{\tx{supp}}
\newcommand{\poly}{\tx{poly}}
\newcommand{\out}{\tx{\normalfont out}}
\newcommand{\inn}{\tx{\normalfont in}}
\newcommand{\F}{\mathbb{F}}
\newcommand{\iref}[2]{(\hyperref[#2]{\ref*{#1}.\ref*{#2}})}
\newcommand{\AEL}{\tx{\normalfont AEL}}
\renewcommand{\epsilon}{\varepsilon}
\newcommand{\eps}{\epsilon}
\newcommand{\Rcor}{\ensuremath{R^*_{\mathrm{cor}}}}
\newcommand{\Rera}{\ensuremath{R^*_{\mathrm{era}}}}
\mathchardef\mhyphen="2D
\theoremstyle{theorem}
\newtheorem{theorem}{Theorem}[section]
\newtheorem*{claim*}{Claim}
\newtheorem{proposition}[theorem]{Proposition}
\newtheorem{question}[theorem]{Question}
\theoremstyle{definition}
\newtheorem{remark}[theorem]{Remark}
\newtheoremstyle{TheoremNum}
{\topsep}{\topsep}              
{\itshape}                      
{}                              
{\bfseries}                     
{.}                             
{ }                             
{\thmname{#1}\thmnote{ \bfseries #3}}
\theoremstyle{TheoremNum}
\newcommand{\leqnomode}{\tagsleft@true\let\veqno\@@leqno}
\newcommand{\reqnomode}{\tagsleft@false\let\veqno\@@leqno}
\title{List Recoverable Codes:\\The Good, the Bad, and the Unknown (hopefully not Ugly)}
\author{Nicolas Resch\thanks{University of Amsterdam. Email:~\ttt{n.a.resch@uva.nl}}
	\and S. Venkitesh\thanks{Tel Aviv University.  Email:~\ttt{venkitesh.mail@gmail.com}}}
\date{}
\begin{document}
	
	\maketitle
	
	{\small
	\begin{abstract}
		List recovery is a fundamental task for error-correcting codes, vastly generalizing unique decoding from worst-case errors and list decoding. Briefly, one is given ``soft information'' in the form of input lists $S_1,\dots,S_n$ of bounded size, and one argues that there are not too many codewords that agree a lot with this soft information.  This general problem appears in many guises, both within coding theory and in theoretical computer science more broadly.
		
		In this article we survey recent results on list recovery codes, introducing both the \emph{good} (i.e., possibility results, showing that codes with certain list recoverability exist), the \emph{bad} (impossibility results), and the \emph{unknown}. We additionally demonstrate that, while list recoverable codes were initially introduced as a component in list decoding concatenated codes, they have since found myriad applications to and connections with other topics in theoretical computer science.  
	\end{abstract}
	
	\tableofcontents}
	
	
	\newpage
	\section{Introduction}\label{sec:intro}

	At its core, coding theory is concerned with the following basic question: what is the best possible tradeoff between a code's rate (which quantifies the efficiency of communication) and its noise-resilience. Here we mean ``noise-resilience'' in a very broad sense: essentially, any forms of errors that many be introduced to data, whether these are performed maliciously or stochastically. For example, if one is concerned with adversarial symbol corruptions or erasures, then minimum distance tightly characterizes the code's fault-tolerance capabilities, at least if one insists that one can always perfectly recover the data (namely, if one insists on \emph{unique decoding}). While this model is already very interesting, in this survey we will concern ourself with a broad generalization, called \emph{list recovery}. 
	
	\subsection{What?}
	
	Before continuing the discussion, let us fix some notation. Let \(\Sigma\) be a finite alphabet.  We assume the \tsf{(relative) Hamming distance} on \(\Sigma^n\), that is, for any two strings \(u,v\in\Sigma^n\), the distance between \(u\) and \(v\), denoted by \(\Delta(u,v)\), is the fraction of coordinates \(i\in[n]\) on which \(u\) and \(v\) differ.  A \tsf{code} is simply a subset \(C\subseteq\Sigma^n\). The quantity \(R(C)\coloneqq\frac{\log_{|\Sigma|}|C|}{n}\) is the \tsf{rate} of \(C\), and the \tsf{minimum (relative Hamming) distance} \(\Delta(C)\) is the minimum distance between any two distinct codewords in \(C\). 
	
	\tsf{List recovery} refers to the paradigm of decoding from \emph{soft} information, where the possibilities encompassing the soft information at any codeword location are captured by an input list. Formally, for integer parameters \(\ell,L\ge1\) and real $\rho \in [0,1)$, we say \(C\) is \tsf{\((\rho,\ell,L)\)-list recoverable} if for any sequence of input lists \(S_1,\ldots,S_n\in\binom{\Sigma}{\ell}\), we have at most \(L\) codewords in \(C\) that agree with the soft information represented by the sequence of lists \((S_1,\ldots,S_n)\) a $1-\rho$ fraction of the time, that is,
	\[
	\big|\big\{c\in C:|\{i\in[n]:c_i\not\in S_i\}|\le\rho n\big\}\big|\le L.
	\]
	
	To write this more compactly, we first extend the definition of the Hamming metric function to allow one of the arguments to consist of a tuple of subsets of $\Sigma$. That is, if $S_1,\dots,S_n \subseteq \Sigma$ and $S \coloneqq (S_1,\dots,S_n)$, for $x \in \Sigma^n$ we define
	\begin{align*}
		\Delta(x,S)&\coloneqq \frac{1}{n}|\{i \in [n]:x_i \notin S_i\}| \\
		&= \min\{\Delta(x,y):y \in S_1 \times \cdots \times S_n\}. 
	\end{align*}
	We can then define \tsf{list recovery balls}, which generalize the notion of Hamming balls. For a tuple $S = (S_1,\dots,S_n) \in \binom{\Sigma}{\ell}^n$ -- where here and throughout, for a set $X$ and integer $1 \leq k \leq |X|$, $\binom{X}{k}$ denotes the family of size-$k$ subsets of $X$ -- and a radius $\rho \in [0,1)$, define:
	\[
	B_\rho(S)\coloneqq\{x \in \Sigma^n:\Delta(x,S) \leq \rho\} \ .
	\]
	Pictorially, such objects can be viewed as a ``puffed-up combinatorial rectangle.'' That is, one starts with the combinatorial rectangle $S_1 \times S_2 \times \cdots \times S_n \subseteq \Sigma^n$, and then places a radius $\rho$ Hamming ball on each point of the combinatorial rectangle.  With this notation in place, we can now compactly write what it means for a code $C$ to be $(\rho,\ell,L)$-list recoverable:
	\begin{align}
		\forall~ S \in \binom{\Sigma}{\ell}^n, ~~|C \cap B_\rho(S)| \leq L \ . \label{eq:list-rec-corruptions-def}
	\end{align}
	See~\cref{fig:combined-puffed-rectangle} for an illustration.

	\begin{figure*}[h]
		\centering
		\begin{tikzpicture}[scale=1.2]
			\foreach \y in {1.5, 2.0, 2.5} 
			{\draw[draw=black, thick, fill=blue!30] (\y,1.0) circle (0.7 cm);
				\draw[draw=black, thick, fill=blue!30] (\y,4.0) circle (0.7 cm);}
			
			\foreach \x in {1.0, 1.5, 2.0, 2.5, 3.0, 3.5, 4.0}
			{\draw[draw=black, thick, fill=blue!30] (1.0,\x) circle (0.7 cm);
				\draw[draw=black, thick, fill=blue!30] (3.0,\x) circle (0.7 cm);}
			
			\filldraw[draw=black, thick, fill=blue!30] (1,1) rectangle (3,4);
			
			\foreach \x in {-2,...,12}
			\foreach \y in {0,...,10}
			{\filldraw[draw=black!70,fill=black!70] (0.5*\y,0.5*\x) circle (0.06cm);}
		\end{tikzpicture}
		\hspace{1.5cm}
		\begin{tikzpicture}[scale=1.2]
			\foreach \t in {2,3,4,5}
			{\fill[fill=blue!30] (0.5+0.5*\t,-1+\t) circle (0.7cm);}
			
			\foreach \t in {3,4}
			\foreach \x in {-0.5,0,0.5}
			\foreach \y in {-0.5,0,0.5}
			{\filldraw[draw=black, dashed, fill=blue!30] (\y+0.5+0.5*\t,\x-1+\t) circle (0.7cm);}
			
			\foreach \x in {0,0.5}
			\foreach \y in {-0.5,0,0.5}
			{\filldraw[draw=black, dashed, fill=blue!30] (\y+0.5+0.5*2,\x-1+2) circle (0.7cm);}
			
			\foreach \x in {-0.5,0}
			\foreach \y in {-0.5,0}
			{\filldraw[draw=black, dashed, fill=blue!30] (\y+0.5+0.5*5,\x-1+5) circle (0.7cm);}
			
			\draw[draw=black, thick] (1,1) rectangle (3,4);
			
			\foreach \x in {-2,...,12}
			\foreach \y in {0,...,10}
			{\filldraw[draw=black!70,fill=black!70] (0.5*\y,0.5*\x) circle (0.06cm);}
			
			\foreach \t in {0,...,7}
			{\filldraw[draw=red, fill=red] (0.5+0.5*\t,-1+\t) circle (0.1cm);}
		\end{tikzpicture}
		\caption{\textbf{Left:}~(Reproduced from~{\cite{resch2020}}) An illustration of a ``puffed-up rectangle'' $B(S,\rho)$, created by placing a ball of radius $\rho$ around each point in a combinatorial rectangle $S$.\protect\\\textbf{Right:} The same rectangle shown with a code $C$ (red vertices). The balls are filtered to show only those with a non-trivial intersection with $C$.}
		\label{fig:combined-puffed-rectangle}
	\end{figure*}
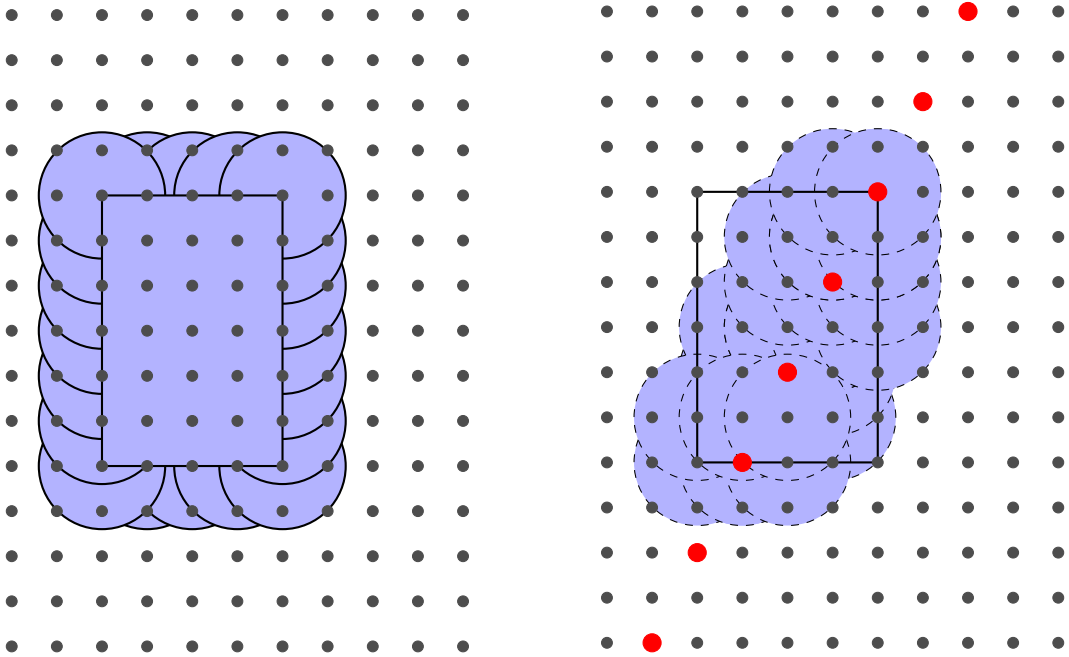
	
	\subsubsection*{Special cases} Two special cases deserve additional attention. When \(\ell=1\) one recovers \tsf{list decoding}, which is the relaxed version (allowing for more errors) of the standard \tsf{unique decoding} for error-correcting codes. In this model, one receives as input a (hard-decision) vector $y \in \Sigma^n$, and one aims to output all codewords $c$ that agree with $y$ sufficiently, in the sense that $\Delta(c,y)\leq \rho$; assuming the number of such codewords is always at most $L$ for any $y \in \Sigma^n$, the code is called $(\rho,L)$-list-decodable. Of course, if one additionally sets $L=1$ then one arrives at the standard notion of unique-decodability from a $\rho$ fraction of errors.
	
	Conversely, consider the case when $\rho=0$. If $\ell=1$, this notion is trivial: one is simply asking for the number of codewords equal to a certain target vector $y \in \Sigma^n$ to be at most $L$. However, once $\ell \geq 2$ this notion becomes nontrivial; in fact, as we will discuss below, it is quite interesting given its connection to objects of study in pseudorandomness. When $\rho=0$ we will refer to $(\ell,L)$-\tsf{zero-error list recovery}.

	\subsubsection*{List recovery from \emph{erasures}}  Up to now we have been implicitly discussing symbol \emph{corruptions} as our model of errors.  Another natural model of errors are \emph{erasures}, where some symbols are replaced by an erasure symbol $\bot \notin \Sigma$. For list recovery, one can consider the following scenario: for some $(1-\rho)$ fraction of the coordinates $i \in [n]$, we obtain soft-information $S_i \in \binom{\Sigma}{\ell}$, and for the remaining $\rho$ fraction of $i\in[n]$ we obtain \emph{no} information, which is modeled by $S_i = \Sigma$. Thus, we make the following definition: a code $C$ is $(\rho,\ell,L)$\tsf{-list recoverable from erasures} if for all $S_1,\dots,S_n \subseteq \Sigma$ such that $|S_i|\leq \ell$ for at least a $1-\rho$ fraction of $i \in [n]$, we have 
	\[
	|C \cap (S_1 \times\cdots\times S_n)| \leq L \ .
	\]
	Note this again naturally generalizes list decoding from erasures, which is the special case where $\ell=1$; and additionally if one insists that $L=1$ then one simply has unique decoding from erasures. 
	
	Having now defined list recovery from erasures, if necessary we will now write list recovery \tsf{from corruptions} to refer to the notion as defined in~\eqref{eq:list-rec-corruptions-def}; however, if just list recovery is written by default it refers to the corruptions model. Lastly, to provide some foreshadowing for the coming results, we remark that while zero error list recovery is a special case of both list recovery from corruptions and erasures, it seems to behave more similarly to list recovery from erasures, at least from the perspective of what \emph{linear} codes can and cannot achieve. 
	
	\subsubsection*{Encoding functions and a hierarchy}  
	Sometimes, it is also important to consider a code \(C\subseteq\Sigma^n\) with respect to how the codewords represent the underlying messages. With abuse of notation, we can equivalently define the code as an injective map \(C:\Sigma^k\to\Sigma^n\), for some \(k\le n\) (and so the rate \(R=k/n\)).  In particular, all the models mentioned so far are generalizations of the elementary problem of \tsf{interpolation}:  Given a received word \(w\in\Sigma^n\), determine the unique message \(m\in\Sigma^k\) (if any) satisfying \(C(m)=w\). 
	
	At the other extreme, all the models mentioned so far can be considered as special cases of \tsf{soft-decision decoding}. Given weight functions \(w_i:\Sigma\to[0,1]\) satisfying \(\sum_{a\in\Sigma}w_i(a)=1\) for every \(i\in[n]\), one requires that
	\[
	\bigg|\bigg\{c\in C:\sum_{i=1}^nw(c_i)>1-\rho\bigg\}\bigg|\le L \ .
	\]See~\cref{fig:LR-hierarchy} for the hierarchy among the different models.
	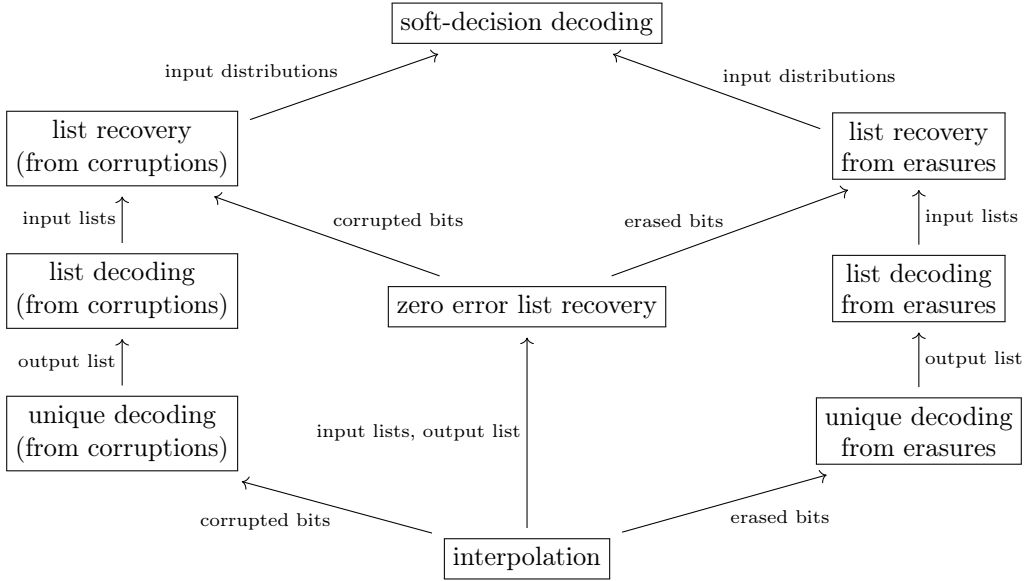
\begin{figure*}[h]
		\centering
		{\small
			\begin{tikzcd}
				& & \boxed{\text{soft-decision decoding}} & & \\
				\fbox{\shortstack{\text{list recovery} \\ \text{(from corruptions)}}} \arrow[rru, "\text{input distributions}"] & & & & \fbox{\shortstack{\text{list recovery} \\ \text{from erasures}}} \arrow[llu, "\text{input distributions}"'] \\
				\fbox{\shortstack{\text{list decoding} \\ \text{(from corruptions)}}} \arrow[u, "\text{input lists}"] & & \boxed{\text{zero error list recovery}} \arrow[llu, "\text{corrupted bits}"'] \arrow[rru, "\text{erased bits}"] & & \fbox{\shortstack{\text{list decoding} \\ \text{from erasures}}} \arrow[u, "\text{input lists}"'] \\
				\fbox{\shortstack{\text{unique decoding} \\ \text{(from corruptions)}}} \arrow[u, "\text{output list}"] & & & & \fbox{\shortstack{\text{unique decoding} \\ \text{from erasures}}} \arrow[u, "\text{output list}"'] \\
				& & \boxed{\text{interpolation}} \arrow[llu, "\text{corrupted bits}"] \arrow[rru, "\text{erased bits}"'] \arrow[uu, "{\text{input lists, output list}}"] & &
		\end{tikzcd}}
		\caption{The hierarchy between the different models.  For any edge, the upper model is more general than the lower model, and the edge label denotes the additional relaxation allowed.}\label{fig:LR-hierarchy}
	\end{figure*}
	
	\subsection{Why? Where? How?}
	
	
	The list recovery paradigm was initially conceived as a relaxation of list decoding, with good list recoverable codes to serve as an intermediary towards unique decoding and list decoding~\cite{GI01,GI02,GI03,GI04}. Wonderfully, this paradigm has since found widespread applications in theoretical computer science, for instance, in cryptography, construction of pseudorandom objects, group testing, and streaming algorithms. 
	
	In light of the above, obtaining a thorough understanding of list recovery has become a fundamental line of investigation in theoretical computer science. This entails studying the (tight) relationships/tradeoffs between the fundamental parameters of the code (rate, distance, alphabet size) and the list recovery parameters (input list size, output list size, list recovery radius), as well as obtaining explicit constructing codes that achieve these tradeoffs.
	
	\subsubsection*{The plan} In this article, we survey the state-of-the-art on list recoverable codes, which includes the tradeoffs between the relevant parameters, guarantees given by random constructions, and guarantees achieved by the current best explicit constructions. 
	
	An overarching story which will appear is the following. Up until recently, almost all combinatorial upper bounds for list recoverability were also algorithmic -- that is, codes were given achieving certain list recovery parameters, and the proof in fact demonstrated that one could efficiently find the output list $C \cap B_\rho(S)$. While these results are very interesting, they do not achieve optimal tradeoffs amongst all the parameters of interest. Only in recent years have direct combinatorial techniques been devised allowing for a deeper probing of what can and cannot be done in the context of list recovery. 
	
	Finally, we conclude by highlighting certain other areas in theoretical computer science where list recoverable codes play starring roles. A theme which will arise is that these application areas demand parameters which are somewhat unnatural from a coding theoretic perspective, and in certain cases necessitate novel techniques. We will make special effort to highlight interesting challenges and open problems posed by these different domains.

	\section{Warmup:  List decoding via list recovery}\label{sec:LD-via-LR}
	
	Among the earliest applications of list recovery was towards obtaining list decodable codes with better parameters.  The simplest instantiation of such an approach was via \emph{concatenated codes}.  In this section, we will see this instantiation as a warmup for the rest of the survey.  Further, we will also see the state-of-the-art improvement over concatenated codes, that are realized via the use of pseudorandom objects called \emph{expander graphs}.
	
	\subsection{List decoding concatenated codes via list recovery}\label{sec:LD-concat}
	
	
	To begin with, let us quickly describe the concatenated code construction.  Assume the following:
	\begin{itemize}
		\item \emph{inner code} \(C_\inn:\Sigma_\inn^k\to\Sigma_\inn^n\) having rate \(R_\inn\) (that is, \(k=R_\inn n\)) and distance \(\delta_\inn\).
		\item \emph{outer code} \(C_\out:\Sigma_\out^K\to\Sigma_\out^N\) having rate \(R_\out\) (that is, \(K=R_\out N\)) and distance \(\delta_\out\).
		\item  \(\Sigma_\out=\Sigma_\inn^k\).
	\end{itemize}
	Take any message \(m=(m_1,\ldots,m_K)\in\Sigma_\out^K\) for \(C_\out\).  This gives the corresponding outer codeword \((c_1,\ldots,c_N) \in C_\out\).  Since \(\Sigma_\out=\Sigma_\inn^k\), each \(c_t,\,t\in[N]\) is a message for \(C_\inn\).  This gives the corresponding tuple of inner codewords \((c'_1,\ldots,c'_N)\in(C_\inn)^N\).  The map \((m_1,\ldots,m_K)\mapsto(c'_1,\ldots,c'_N)\) is well-defined, and defines the \tsf{concatenated code} \(C_\out\circ C_\inn:(\Sigma_\inn)^{kK}\to(\Sigma_\inn)^{nN}\).  It follows easily~\cite[Theorem 10.1.1]{guruswami-rudra-sudan-2023-ECT} that \(C_\out\circ C_\inn\) has rate \(R_\inn R_\out\) and distance \(\delta_\inn\delta_\out\).
	
	The fundamental result that shows how list recovery can be leveraged towards list decoding is the following.  We refer the reader to~\cite{guruswami-rudra-2009-multilevel-concatenation}, for a further improved and more sophisticated construction involving \emph{multilevel concatenation}, and a decoding algorithm involving soft-decision decoding.
	\begin{proposition}[Folklore, cf.~{\cite[Lemma 2.1]{guruswami-rudra-2009-multilevel-concatenation}}]
		If \(C_\out\) is \((\xi,\ell,L)\)-list recoverable, and \(C_\inn\) is \((\rho,\ell)\)-list decodable, then \(C_\out\circ C_\inn\) is \((\xi\rho,L)\)-list decodable.
	\end{proposition}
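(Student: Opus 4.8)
The plan is to run the inner list decoder block-by-block on the received word, thereby converting a list-decoding instance for $C_\out \circ C_\inn$ into a list-recovery instance for $C_\out$, and then apply the hypothesis on $C_\out$.

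First I would fix an arbitrary received word $w \in \Sigma_\inn^{nN}$ and parse it into $N$ consecutive blocks $w = (w_1,\dots,w_N)$ with each $w_t \in \Sigma_\inn^n$. For each $t \in [N]$, apply the $(\rho,\ell)$-list decodability of $C_\inn$ to $w_t$: the set of inner codewords within relative distance $\rho$ of $w_t$ has size at most $\ell$. Pulling each such codeword back through the injective encoding $C_\inn$ gives a set $S_t \subseteq \Sigma_\out = \Sigma_\inn^k$ with $|S_t| \le \ell$; padding arbitrarily if it is smaller (which can only enlarge the final output list, and assumes, as we may, $|\Sigma_\out| \ge \ell$), we take $S_t \in \binom{\Sigma_\out}{\ell}$. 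Set $S \coloneqq (S_1,\dots,S_N)$.

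The heart of the argument is the claim that every codeword $c' \in C_\out \circ C_\inn$ with $\Delta(c',w) \le \xi\rho$ comes from an outer codeword in the list-recovery ball $B_\xi(S)$. Write such a $c'$ as $(C_\inn(c_1),\dots,C_\inn(c_N))$, where $(c_1,\dots,c_N) = C_\out(m) \in C_\out$ is the associated outer codeword. Since relative Hamming distance on $\Sigma_\inn^{nN}$ is the average of the per-block relative distances, $\frac1N\sum_{t=1}^N \Delta(C_\inn(c_t),w_t) = \Delta(c',w) \le \xi\rho$, so by Markov's inequality the number of blocks $t$ with $\Delta(C_\inn(c_t),w_t) > \rho$ is strictly less than $\xi N$. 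For every remaining block $t$ we have $\Delta(C_\inn(c_t),w_t) \le \rho$, so $C_\inn(c_t)$ lies on the inner decoder's list and hence $c_t \in S_t$. Thus $|\{t \in [N] : c_t \notin S_t\}| < \xi N$, i.e.\ $\Delta(C_\out(m),S) \le \xi$, which says precisely that $C_\out(m) \in B_\xi(S)$.

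To finish, invoke the $(\xi,\ell,L)$-list recoverability of $C_\out$ to get $|C_\out \cap B_\xi(S)| \le L$. Because $C_\out$ (and therefore $C_\out \circ C_\inn$) is injective, distinct codewords $c'$ of the concatenated code yield distinct outer codewords $C_\out(m)$, so $|\{c' \in C_\out \circ C_\inn : \Delta(c',w) \le \xi\rho\}| \le L$; as $w$ was arbitrary, $C_\out \circ C_\inn$ is $(\xi\rho,L)$-list decodable. I do not anticipate a genuine obstacle here — the argument is entirely soft — and the only points requiring a little care are the strict-versus-nonstrict and integrality bookkeeping in the Markov step (fewer than $\xi N$ blocks are ``bad,'' hence at most $\xi N$ outer coordinates are uncovered) and the observation that padding each $S_t$ up to size exactly $\ell$ is harmless.
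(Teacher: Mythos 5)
Your proposal is correct and follows essentially the same route as the paper's proof: run the inner list decoder block-by-block to build the input lists $S_t$, show by an averaging/counting argument (your Markov step is the paper's contradiction argument on the set $T$ of well-agreeing blocks) that any close concatenated codeword projects to an outer codeword in $B_\xi(S)$, and finish with the injectivity of $C_\inn$ and the list recoverability of $C_\out$. The only cosmetic difference is that you make the padding of $S_t$ up to size exactly $\ell$ explicit, which the paper leaves implicit.
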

	\begin{proof}
		Note that in this proof, we will abuse notation and denote by \(B_\rho^k\) the Hamming ball in the ambient spaces \(\Sigma_\inn^k\) as well as \(\Sigma_\out^k\) respectively.  The ambient space in each case will be clear from the context.  Also denote \(\ol{C}=C_\out\circ C_\inn\).
		
		Consider any word \(w\in\Sigma_\inn^{nN}\), and denote
		\begin{align*}
			w&=(w_1,\ldots,w_N)\\
			&=(w_{1,1},\ldots,w_{1,n},\,\ldots,\, w_{N,1},\ldots,w_{N,n}).
		\end{align*}
		We wish to show that
		\[
		\big|\ol{C}\cap B_{\xi\rho}^{nN}(w)\big|\le L.
		\]
		Since \(C_\inn\) is \((\rho,\ell)\)-list decodable, we get
		\[
		\forall\,t\in[N],\quad\big|C_\inn\cap B_\rho^n(w_t)\big|\le\ell.
		\]
		For every \(t\in[N]\), denote \(S_t=C_\inn^{-1}\big(C_\inn\cap B_\rho^n(w_t)\big)\subseteq\Sigma_\inn^k\), and notice that since \(C_\inn\) is injective, we get \(|S_t|\le\ell\).  Consider the combinatorial rectangle \(S=S_1\times\cdots\times S_N\).  Since \(C_\out\) is \((\xi,\ell,L)\)-list recoverable, we already have
		\[
		\big|C_\out\cap B_\xi^N(S)\big|\le L.
		\]
		
		We will prove our assertion by showing an injection
		\[
		\ol{C}\cap B_{\xi\rho}^{nN}(w)\lhook\joinrel\longrightarrow C_\out\cap B_\xi^N(S).
		\]
		Indeed, if we have such an injection, then it would immediately imply
		\[
		\big|\ol{C}\cap B_{\xi\rho}^{nN}(w)\big|\le\big|C_\out\cap B_\xi^N(S)\big|\le L,
		\]
		and so we can conclude that \(\ol{C}\) is \((\xi\rho,L)\)-list decodable.
		
		Now let \(c'\in \ol{C}\cap B_{\xi\rho}^{nN}(w)\), and denote
		\begin{align*}
			c'&=(c'_1,\ldots,c'_N)\\
			&=(c'_{1,1},\ldots,c'_{1,n},\,\ldots,\, c'_{N,1},\ldots,c'_{N,n}).
		\end{align*}
		So we have
		\begin{align}
			&\sum_{\substack{t\in[N]\\i\in[n]}}\mbf{1}(c'_{t,i}=w_{t,i})>(1-\xi\rho)nN.\label{eq:concat-dist}
		\end{align}
		Informally, this means \(c'\) has large agreement with \(w\).  We will now argue that this implies \(c'_t\) has large agreement with \(w_t\), for a large number of \(t\in[N]\).  Let
		\[
		T=\bigg\{t\in[N]:\sum_{i\in[n]}\tbf{1}(c'_{t,i}=w_{t,i})>(1-\rho)n\bigg\}.
		\]
		If \(|T|\le(1-\xi)N\), then we have
		\begin{align*}
			&\qquad\sum_{\substack{t\in[N]\\i\in[n]}}\mbf{1}(c'_{t,i}=w_{t,i})\\
			&=\sum_{\substack{t\in T\\i\in[n]}}\mbf{1}(c'_{t,i}=w_{t,i})+\sum_{\substack{t\in[N]\setminus T\\i\in[n]}}\mbf{1}(c'_{t,i}=w_{t,i})\\
			&\le|T|n+(N-|T|)(1-\rho)n\\
			&=|T|\rho n+(1-\rho)nN\\
			&\le((1-\xi)\rho+(1-\rho))nN\\
			&=(1-\xi\rho)nN,
		\end{align*}
		which contradicts~(\ref{eq:concat-dist}).  So \(|T|>(1-\xi)N\).  In other words, we have
		\[
		\big|\{t\in[N]:c'_t\in C_\inn\cap B_\rho^n(w_t)\}\big|>(1-\xi)N,
		\]
		which implies
		\[
		(C_\inn^{-1}(c'_1),\ldots,C_\inn^{-1}(c'_N))\in C_\out\cap B_\xi^N(S).
		\]
		Since \(C_\inn\) is injective, we get the desired injection
		\begin{align*}
			\ol{C}\cap B_{\xi\rho}^{nN}(w)&\lhook\joinrel\longrightarrow C_\out\cap B_\xi^N(S)\\
			c'&\longmapsto(C_\inn^{-1}(c'_1),\ldots,C_\inn^{-1}(c'_N)).\qedhere
		\end{align*}
	\end{proof}
	
	\subsection{Concatenation \(+\) Expansion \(=\) Magic!}\label{sec:AEL}

	An unavoidable drawback of concatenated codes is their rate v/s distance tradeoff.  To recall, the Singleton bound~\cite{komamiya-1953-MDS,joshi-1958-MDS,singleton-1964-MDS} states that a rate \(R\) code has distance \(\delta\le 1-R\).\footnote{Strictly speaking, the Singleton bound states that \(\delta\le 1-R+(1/n)\), where \(n\) is the length of the code.  Since we are only concerned with asymptotic results as \(n\to\infty\), we ignore the additive \(1/n\) term.}  The codes which attain this bound are called \tsf{maximum distance separable (MDS) codes}.  For now, we assume that \emph{MDS codes exist for all the parameters that we wish}.  We will elaborate on some subtleties at the end of this section.
	
	Continuing with the notations from~\cref{sec:LD-concat}, as mentioned before, the code \(C_\out\circ C_\inn\) has rate \(R_\inn R_\out\) and distance \(\delta_\inn\delta_\out\).  This means the concatenated code can be very far from being MDS even if the inner and outer codes are both MDS.  For instance, if we choose both \(C_\inn\) and \(C_\out\) to be MDS with \(R_\inn=R_\out\eqqcolon R\), then \(C_\out\circ C_\inn\) has rate \(R^2\) and distance \((1-R)^2\).  However, by the Singleton bound, a code with rate \(R^2\) can have distance as much as
	\[
	1-R^2>(1-R)^2.
	\]
 This is one of the many typical situations where the rich area of \emph{pseudorandomness} steps in.  One of the overarching strategies that this area employs is the following: take a \emph{best, small} object (usually obtained by brute-force search) and \emph{lift} it using an explicit \emph{good, large} object to get an explicit \emph{better (or nearly best), large} object.  A breakthrough work by~\cite{alon-edmonds-luby-1995-AEL} showed how this can be done to \emph{amplify} the distance of a concatenated code to get \emph{nearly MDS} codes.  We will quickly review this beautiful construction here, which is popularly called the \emph{AEL code}.  See~\cref{fig:AEL} for a block diagram showing a typical pseudorandom construction, as well as the AEL code construction.
	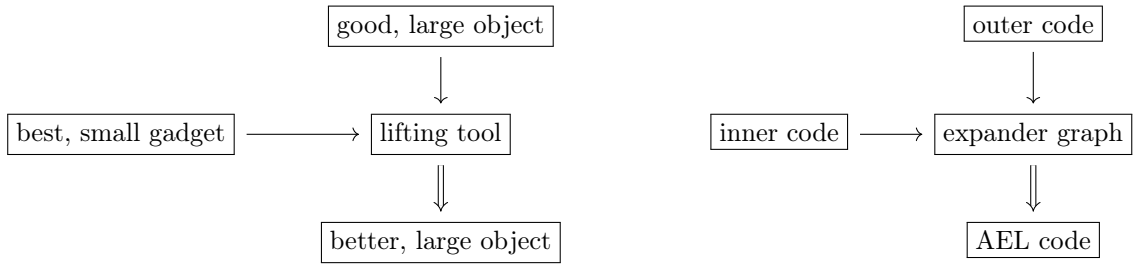
\begin{figure*}[h]
		\centering
		{\small
			\begin{tikzcd}
				& {\boxed{\text{good, large object}}} \arrow[d]     &  &                                     & \boxed{\text{outer code}} \arrow[d]                 \\
				{\boxed{\text{best, small gadget}}} \arrow[r] & \boxed{\text{lifting tool}} \arrow[d, Rightarrow] &  & \boxed{\text{inner code}} \arrow[r] & \boxed{\text{expander graph}} \arrow[d, Rightarrow] \\
				& {\boxed{\text{better, large object}}}             &  &                                     & \boxed{\text{AEL code}}                            
			\end{tikzcd}
		}
		\caption{A typical pseudorandom construction strategy (left), and the AEL code construction (right).}\label{fig:AEL}
	\end{figure*}
	
	An \tsf{\((N,D,\lambda)\)-biregular bipartite expander} is a \((D,D)\)-biregular bipartite graph \(G=(U,V,E)\) with \(|U|=|V|=n\), such that the adjecancy matrix has second eigenvalue \(\sigma_2(G)\le\lambda D\).\footnote{Since \(G\) is \((D,D)\)-biregular, the largest eigenvalue \(\sigma_1(G)=D\).  There are other notions of expansion like vertex expansion, edge expansion, etc. which we do not consider here.  However, in principle, these are all equivalent to spectral expansion.  Furthermore, expander graphs need not be biregular or bipartite in generhere we considerider the biregular bipartite structure for simplicity.}  Now assume the following:
	\begin{itemize}
		\item  \emph{inner code} \(C_\inn:\Sigma_\inn^k\to\Sigma_\inn^D\) having rate \(R_\inn\) (that is, \(k=R_\inn D\)) and distance \(\delta_\inn\).
		\item  \emph{outer code} \(C_\out:\Sigma_\out^K\to\Sigma_\out^N\) having rate \(R_\out\) (that is, \(K=R_\out N\)) and distance \(\delta_\out\).
		\item  \(\Sigma_\out=\Sigma_\inn^k\).
		\item  \(G=([N],[N],E)\), an \((N,D,\lambda)\)-bipartite expander.
	\end{itemize}
	Order the \(D\) edges incident on each vertex arbitrarily.  For each edge \(e\in E\), define \(\mtt{L}(e)=(\ell,i)\in[N]\times[D]\) and \(\mtt{R}(e)=(r,j)\in[N]\times[D]\), if \(e\) is the \(i\)-th edge incident on the left vertex \(\ell\), and the \(j\)-th edge incident on the right vertex \(r\).
	
	The codewords in the \tsf{AEL code} \(C_\AEL\) are defined via a one-to-one correspondence (using the expander \(G\)) with the codewords of \(C_\out\circ C_\inn\).  Consider any codeword in \(C_\out\circ C_\inn\),
	\[
	c=(c_{1,1},\ldots,c_{1,D},\,\ldots,\,c_{N,1},\ldots,c_{N,D}).
	\]
	The corresponding codeword in \(C_\AEL\) is
	\[
	c'=(c'_{1,1},\ldots,c'_{1,D},\,\ldots,\,c'_{N,1},\ldots,c'_{N,D}),
	\]
	which is a permutation of the entries of \(c\), defined by the relation
	\begin{align*}
		&\quad c_{\ell,i}=c'_{r,j}\\
		\iff\n&\exists\,e\in E,\n \mtt{L}(e)=(\ell,i),\,\mtt{R}(e)=(r,j).
	\end{align*}
	Informally, the symbols of \(c\in C_\out\circ C_\inn\) are redistributed as per the edges of the graph \(G\) to obtain \(c'\in C_\AEL\).  Further, \(c'\) is now considered as a vector of length \(N\) over the alphabet \(\Sigma_\inn^D\).  This simple redistribution via the expander graph as a \emph{lifting tool} leads to the following `magical' distance amplification.
	\begin{theorem}[{\cite{alon-edmonds-luby-1995-AEL}}]\label{thm:AEL}
		The code \(C_\AEL\) has rate \(R_\inn R_\out\), and distance at least
		\[
		\delta_\inn-\frac{\lambda}{\delta_\out}.
		\]
	\end{theorem}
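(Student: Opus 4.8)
The plan is to treat the rate and the distance separately: the rate is immediate from the construction, and the distance is where the expander mixing lemma does the work. As a code of length $N$ over the alphabet $\Sigma_\inn^D$, the code $C_\AEL$ is obtained from $C_\out\circ C_\inn$ — a length-$DN$ code over $\Sigma_\inn$ with $|\Sigma_\inn|^{kK}$ codewords — purely by permuting the $DN$ coordinates according to the edges of $G$ and then regrouping them into $N$ consecutive blocks of $D$ symbols. Both operations are bijections on $\Sigma_\inn^{DN}$, so $|C_\AEL|=|\Sigma_\inn|^{kK}$ and the rate is $\frac1N\log_{|\Sigma_\inn|^D}|\Sigma_\inn|^{kK}=\frac{kK}{DN}=R_\inn R_\out$, using $k=R_\inn D$ and $K=R_\out N$.

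For the distance I would first reduce it to a purely graph-theoretic statement. Fix distinct $c',\tilde c'\in C_\AEL$ and let $c,\tilde c\in C_\out\circ C_\inn$ be the corresponding codewords, arising from outer codewords $m\ne\tilde m\in C_\out$. Call a left vertex $t\in[N]$ \emph{bad} if $C_\inn(m_t)\ne C_\inn(\tilde m_t)$; since $C_\inn$ is injective, the set $B$ of bad vertices equals $\{t:m_t\ne\tilde m_t\}$, so $|B|\ge\delta_\out N$. For $t\in B$ the inner codewords $C_\inn(m_t)$ and $C_\inn(\tilde m_t)$ differ in at least $\delta_\inn D$ positions; I will call the corresponding edge-slots $(t,i)$ \emph{corrupted}, so there are at least $\delta_\inn D|B|$ corrupted slots, i.e.\ corrupted edges of $G$. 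If a right vertex $r$ is incident to a corrupted edge $e$, with $\mtt{L}(e)=(t,i)$ and $\mtt{R}(e)=(r,j)$, then by the defining relation $c'_{r,j}=c_{t,i}\ne\tilde c_{t,i}=\tilde c'_{r,j}$, so the $r$-th $\Sigma_\inn^D$-symbols of $c'$ and $\tilde c'$ differ. Hence, writing $T^*\subseteq[N]$ for the set of right vertices incident to at least one corrupted edge, $\Delta(c',\tilde c')\ge|T^*|/N$, and it remains to lower-bound $|T^*|$.

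This is where expansion enters. Every corrupted edge has its left endpoint in $B$ and its right endpoint in $T^*$, so the number $e(B,T^*)$ of edges of $G$ between $B$ and $T^*$ satisfies $e(B,T^*)\ge\delta_\inn D|B|$. On the other hand, the expander mixing lemma applied to the $(D,D)$-biregular bipartite graph $G$ with $\sigma_2(G)\le\lambda D$ gives $e(B,T^*)\le\frac DN|B||T^*|+\lambda D\sqrt{|B||T^*|}$. Combining the two bounds, dividing by $D|B|$, and setting $\tau=|T^*|/N$ and $\beta=|B|/N\ge\delta_\out$, I get $\delta_\inn\le\tau+\lambda\sqrt{\tau/\beta}$; since $\tau,\beta\le1$ and $\beta\ge\delta_\out$ we have $\sqrt{\tau/\beta}=\sqrt{\tau\beta}/\beta\le1/\beta\le1/\delta_\out$, so $\delta_\inn\le\tau+\lambda/\delta_\out$, i.e.\ $\Delta(c',\tilde c')\ge\tau\ge\delta_\inn-\lambda/\delta_\out$.

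The only genuine obstacle is the idea in the second paragraph: measuring corruption in terms of \emph{edges} (corrupted slots) rather than blocks, noticing that corrupted edges are confined to $B\times T^*$, and then letting the expander mixing lemma force $|T^*|$ to be large. Everything else — the rate computation, the translation of codeword distance into a statement about bad blocks and corrupted slots, and the closing arithmetic — is routine bookkeeping. I would also note that the crude estimate $\sqrt{\tau/\beta}\le1/\delta_\out$ is used only to match the clean form stated; retaining $\sqrt{\tau/\beta}\le1/\sqrt{\delta_\out}$ yields the slightly stronger bound $\delta_\inn-\lambda/\sqrt{\delta_\out}$.
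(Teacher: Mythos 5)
Your proof is correct, and it is essentially the standard argument for the AEL distance amplification (the paper itself only states \cref{thm:AEL} with a citation and gives no proof): reduce to counting corrupted edges, observe they all run between the bad left blocks $B$ and the touched right blocks $T^*$, and let the bipartite expander mixing lemma force $|T^*|$ to be large. All the steps check out, including the closing estimate $\sqrt{\tau/\beta}\le\sqrt{\tau\beta}/\beta\le 1/\delta_\out$ and your observation that keeping $1/\sqrt{\delta_\out}$ gives the sharper bound $\delta_\inn-\lambda/\sqrt{\delta_\out}$; the only cosmetic difference from the usual write-up is that one often works with the complementary set of \emph{agreeing} right blocks and upper-bounds $e(B,T)$ instead, which is equivalent.
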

	
	We can instantiate the above construction as follows.  Fix any constant rate \(R\in(0,1)\), and arbitrarily small constant \(\epsilon\in(0,1-R)\).  Let \(\lambda=\epsilon^2\) and \(D=4/\lambda^2=4/\epsilon^4\).  By a classic expander graph construction~\cite{lubotzky-phillips-sarnak-1988-ramanujan} using \emph{Cayley graphs}, we then have an explicit \((N,D,\lambda)\)-bipartite expander \(G\).\footnote{The relation \(D=4/\lambda^2\) that we choose is essentially optimal for spectral expanders.  Such optimal spectral expanders are called \tsf{Ramanujan graphs}.}  Take MDS codes \(C_\inn\) and \(C_\out\) with
	\begin{align*}
		R_\inn&=R,&\delta_\inn&=1-R,\\
		R_\out&=1-\epsilon,&\delta_\out&=\epsilon.
	\end{align*}
	Then~\cref{thm:AEL} gives the explicit code \(C_\AEL\) with rate at least \(R-\epsilon\), and distance at least \(1-R-\epsilon\).
	
	We conclude this discussion by mentioning that among the most recent breakthroughs, we now know that suitably instantiated AEL codes can be algorithmically list decoded~\cite{jeronimo-mittal-srivastava-tulsiani-2025-AEL}, and list recovered~\cite{srivastava-tulsiani-2025-AEL} up to capacity.  A crucial ingredient of these algorithms is the \emph{Sum of Squares (SoS) method}, which adapts a classical \emph{proofs-to-algorithms} framework~\cite{fleming-kothari-pitassi-2019-proofs-to-algorithms} to the setting of decoding from errors.
	
	\subsubsection*{Curioser and curioser...}  We end this section with some subtleties that we conveniently avoided earlier.  These are interesting lines of investigation in their own right, but beyond our current scope.
	\begin{enumerate}[(a)]
		\item  \emph{MDS codes are not always possible!}\quad  The above situation is idealized in the sense that we assume we can have MDS inner and outer codes.  In fact, the best tradeoff that we can hope for is tied to the alphabet size.  With code concatenation, the inner alphabet is usually small, say binary. So hoping for the inner code to be MDS isn't the usual setting.  What we typically have is inner code attaining the \emph{Gilbert-Varshamov (GV) bound} and outer code MDS, and then the concatenated code hits what is called the \emph{Zyablov bound}.  We will not delve into further details.  Our idealized setting above is essentially when we take \(|\Sigma_\inn|\ge2^{\poly(1/\epsilon)}\), which allows us to have inner codes of rate \(R_\inn\) and distance \(\delta_\inn\ge1-R_\inn-\epsilon\).
		
		\item\emph{Optimal list decoding is not always possible!}\quad  In the setting of list decoding, one also obtains a suboptimal tradeoff between the rate and decoding radius. Specifically, such constructions shine when the decoding radius is set to $1-1/|\Sigma_\inn|-\eps$, and then one sets the rate $R = \poly(\eps)$; whereas one would hope for $R=\Omega(\eps^2)$ when $\Sigma_\inn = O(1)$, the rate in such (explicit)\footnote{We do remark that Guruswami and Rudra~\cite{GR08} showed that there \emph{exist} concatenated codes list decodable up to capacity. But an explicit construction is not yet known.} constructions is unfortunately worse.
	\end{enumerate}
	
	\section{What do random codes promise us?}
	
	In coding theory (and in combinatorics more broadly) one often would like to determine optimal parameters for certain structures. In the case of list recoverable codes, our dream is to understand the optimal tradeoff between a code's rate (parametrized by $R$) and its list recoverability (parametrized by $\rho$, $\ell$ and $L$; additionally by the noise model of corruptions or errors). In this section we will provide a (somewhat coarse) answer to this question. 
	
	As a lens on this problem (which, in its most general form, appears \emph{very} difficult) we will look to ascertain what happens for ``most'' codes. Namely, we will consider sampling a code $C$ of a prescribed rate according to some distribution, and then determine what sort of list recoverability this code is likely to satisfy. 
	
	Firstly, we consider what we call \tsf{plain random codes (PRCs)} or rate $R$, which are subsets $C \subseteq \Sigma^n$ obtained by including each $x \in \Sigma^n$ in $C$ with probability $q^{-(1-R)n}$ (recall $q=|\Sigma|$), making these decisions \emph{independently} for each $x$.\footnote{Note that such a code $C$ has size $q^{Rn}$ in expectation, and by a Chernoff bound can be shown to have size $\geq q^{Rn}/2$ except with exponentially small probability. One could also choose a uniformly random subset of size $q^{Rn}$ which behaves similarly for all intents and purposes, but the given model makes for easier analysis.}
	
	If we additionally insist $q$ be a prime power and identify $\Sigma = \F_q$, a finite field with $q$ elements, then we can talk of \tsf{random linear codes (RLCs)}, obtained by sampling $H \in \F_q^{n-k \times n}$ for $k=Rn$ uniformly at random and defining $C := \{x \in \F_q^n : Hx=0\}$.\footnote{Other reasonable models include choosing a uniformly random subspace of dimension $k$, or choosing $G \in \F_q^{n \times k}$ uniformly at random and defining $C := \{Gx : x \in \F_q^{k}\}$. Conditioned on $H$ and $G$ having full rank (which, for constant $R$, occurs with probability $1-q^{-\Omega(n)}$), these models are identical.} This way we ensure we sample a more \emph{structured} code, a topic we discuss further later. Additionally, if $q \geq n$ we can consider \tsf{random Reed-Solomon (RRS)} codes, obtained by sampling $\alpha_1,\dots,\alpha_n \in \F_q$ uniformly subject to being distinct,\footnote{Once $q \gg n^2$ one could sample the $\alpha_i$'s independently, and still be certain that with high probability the $\alpha_i$'s will be distinct.} and defining $C := \{(f(\alpha_1),\dots,f(\alpha_n)):f \in \F_q[X],\deg f \leq k-1\}$.  
	
	To set the stage we will present the \emph{list recovery capacity theorem}, which gives an overarching tradeoff between the parameters of list recoverable codes. This essentially sets the \emph{rules of the game} by specifying what performance can be achieved by PRCs.  We then know clearly what goals to aim towards, when it comes to the quest for more structured (or, more ambitiously, \emph{explicit}) codes.
	
	\subsection{List recovery capacity}
	
	Let \(|\Sigma|=q\), and for \(\ell\in[1,q-1]\), let \(h_{q,\ell}\) denote the \tsf{\((q,\ell)\)-entropy function}, that is,
	\[
	h_{q,\ell}(x)\coloneqq x\log_q\bigg(\frac{q-\ell}{x}\bigg)+(1-x)\log_q\bigg(\frac{\ell}{1-x}\bigg)
	\]
	for \(x\in(0,1-\ell/q)\).
	\begin{theorem}[List recovery capacity theorem, {cf.~\cite{resch2020}}]\label{thm:LR-capacity-theorem}
		Consider any \(q\ge2,\,\ell\in[1,q-1]\), and \(\rho\in(0,1-\ell/q),\,\epsilon\in(0,1-h_{q,\ell}(\rho))\).  For all sufficiently large \(n\),
		\begin{itemize}
			\item  {\normalfont\tsf{Possibility.}}\quad  a random code of length \(n\) having rate \(1-h_{q,\ell}(\rho)-\epsilon\) is \((\rho,\ell,O(\ell/\epsilon))\)-list recoverable with probability at least \(1-q^{-\epsilon n}\).
			\item  {\normalfont\tsf{Impossibility.}}\quad  any code of length \(n\) having rate \(1-h_{q,\ell}(\rho)+\epsilon\) is not \((\rho,\ell,q^{o(\eps n)})\)-list recoverable.
		\end{itemize}
	\end{theorem}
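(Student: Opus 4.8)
The plan is to handle the two parts separately via a first-moment / counting argument for possibility, and a pigeonhole-style volume argument for impossibility.

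\textbf{Possibility.} I would take $C$ to be a plain random code (PRC) of rate $r = 1 - h_{q,\ell}(\rho) - \epsilon$, so each $x \in \Sigma^n$ lands in $C$ independently with probability $q^{-(1-r)n} = q^{-(h_{q,\ell}(\rho)+\epsilon)n}$. The key geometric fact is a bound on the volume of a list-recovery ball: for any $S \in \binom{\Sigma}{\ell}^n$,
\[
|B_\rho(S)| \le q^{(h_{q,\ell}(\rho)+o(1))n},
\]
which follows by summing $\binom{n}{\le \rho n}$ ways to choose the ``bad'' coordinates, $(q-\ell)^{\rho n}$ choices on those coordinates, and $\ell^{(1-\rho)n}$ choices on the good ones, then checking that the exponent matches the definition of $h_{q,\ell}$. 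Given this, fix a ball $B_\rho(S)$ and a candidate set $T \subseteq B_\rho(S)$ of size $L+1$ with $L = \lceil c\ell/\epsilon \rceil$ for a suitable constant $c$; the probability that all of $T$ lies in $C$ is $q^{-(1-r)(L+1)n}$. Union-bounding over the at most $q^{\ell n}$ choices of $S$ and the at most $|B_\rho(S)|^{L+1} \le q^{(h_{q,\ell}(\rho)+o(1))(L+1)n}$ choices of $T$, the failure probability is at most
\[
q^{\ell n} \cdot q^{(h_{q,\ell}(\rho)+o(1))(L+1)n} \cdot q^{-(h_{q,\ell}(\rho)+\epsilon)(L+1)n} = q^{(\ell - \epsilon(L+1) + o(1))n}.
\]
Choosing $L+1 > (\ell + \epsilon n^{-1/2})/\epsilon$, i.e.\ $L = O(\ell/\epsilon)$, makes this exponent at most $-\epsilon n$ for large $n$, giving the claimed $1 - q^{-\epsilon n}$ bound (absorbing the $o(1)$).

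\textbf{Impossibility.} Here I would argue that any code $C$ of rate $1 - h_{q,\ell}(\rho) + \epsilon$ must have a single list-recovery ball containing super-polynomially many codewords. The clean way is an averaging argument: choose $S = (S_1,\dots,S_n)$ by picking each $S_i \in \binom{\Sigma}{\ell}$ independently and uniformly. For a fixed codeword $c$, the probability that $c \in B_\rho(S)$ — i.e.\ $c_i \in S_i$ on at least a $(1-\rho)$ fraction of coordinates — can be lower bounded: for each $i$, $\Pr[c_i \in S_i] = \ell/q$, and a Chernoff/binomial estimate shows $\Pr[c \in B_\rho(S)] \ge q^{-(h_{q,\ell}(\rho) + o(1))n}$ (this is the entropy-function calculation run in the other direction, since $\rho < 1 - \ell/q$ means we are asking for \emph{more} agreement than the mean). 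By linearity of expectation,
\[
\Exp_S\big[|C \cap B_\rho(S)|\big] \ge |C| \cdot q^{-(h_{q,\ell}(\rho)+o(1))n} \ge q^{(\epsilon - o(1))n},
\]
so some particular $S$ achieves $|C \cap B_\rho(S)| \ge q^{\Omega(\epsilon n)}$, which is $q^{\omega(\epsilon' n)}$ for any $\epsilon' = o(\epsilon)$ — contradicting $(\rho,\ell,q^{o(\epsilon n)})$-list recoverability. (One must be slightly careful matching the $q^{o(\epsilon n)}$ in the statement to the $q^{\Omega(\epsilon n)}$ here; shrinking $\epsilon$ by a constant and tracking the lower-order terms in the entropy estimate suffices.)

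\textbf{Main obstacle.} The routine but genuinely delicate part is the two-sided entropy estimate: showing $|B_\rho(S)| = q^{(h_{q,\ell}(\rho)+o(1))n}$ with matching upper and lower bounds, and controlling the $o(1)$ error terms (from Stirling's approximation on $\binom{n}{\rho n}$ and from the Chernoff slack) well enough that the exponents $-\epsilon(L+1)n$ and $+\epsilon n$ survive. In the impossibility direction one additionally needs the concentration estimate for the binomial $\mathrm{Bin}(n,\ell/q)$ to have a \emph{lower} tail bound of the right exponential order at the point $(1-\rho)n$, which is exactly where the function $h_{q,\ell}$ comes from; verifying that the definition given in the paper is the correct rate function for this large-deviation event (rather than off by which of $\ell$ versus $q-\ell$ appears where) is the one place to proceed with care. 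Everything else is a standard union bound / first-moment argument.
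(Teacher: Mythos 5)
Your proposal is correct and follows essentially the same route as the paper: the same volume estimate $|B_\rho(S)| \approx q^{h_{q,\ell}(\rho)n}$, the same union bound over input-list tuples and $(L+1)$-subsets using the independence of the events $x\in C$ for the possibility direction, and the same first-moment averaging over a uniformly random $S$ for the impossibility direction. The details you flag as delicate (Stirling-type control of the ball volume and the lower tail of $\mathrm{Bin}(n,\ell/q)$) are exactly what the paper's two-sided estimate \eqref{eq:list-rec-ball-estimate} supplies.
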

	Since~\cref{thm:LR-capacity-theorem} has precisely specified the limits of possibility, \emph{your mission, should you choose to accept}, is the following.
	\begin{question}\label{que:capacity-theorem}
		Find explicit codes that achieve the parameters guaranteed by random codes as in~\cref{thm:LR-capacity-theorem}. (There are constructions achieving similar parameters in certain regimes, but not all regimes.)
	\end{question}
	
	
	In the remainder of this section, let us provide the by-now standard proof of the capacity theorem. 
	\begin{proof}[Proof of~{\cref{thm:LR-capacity-theorem}}]
		To begin with, let us clarify that in our argument, by ``sufficiently large \(n\)'', we mean \(n\ge 9/\epsilon^2\).  In the possibility result, it is enough to take output list size \(L=\lceil\ell/\eps\rceil\).  In the impossibility result, in fact, we show that list recoverability is not possible even with output list size \(q^{\eps n/2}\).
		
		The capacity theorem may appear quite obscure at first blush -- what is this $(q,\ell)$-entropy function? -- but it is quite natural once one is made aware of the following estimate. For a tuple $S = (S_1,\dots,S_n) \in \binom{\Sigma}{\ell}^n$, the radius $\rho$ list-recovery ball centered at $S$ has size
		\begin{align}
			\frac{q^{h_{q,\ell}(\rho)\cdot n}}{\sqrt{2\pi n\rho(1-\rho)}} \leq |B_\rho(S)| \leq q^{h_{q,\ell}(\rho)\cdot n} \ . \label{eq:list-rec-ball-estimate}
		\end{align}
		From here, both the possibility and the impossibility results from the capacity theorem are fairly easy to establish. Both follow from the probabilistic method. For the possibility side, consider a PRC $C$ of rate $R$. $C$ fails to be $(\rho,\ell,L)$-list recoverable if there is an input list tuple $(S_1,\dots,S_n) \in \binom{\Sigma}{\ell}^n$ for which $B_\rho(S)$ contains $L+1$ codewords. By union bounding over the at most $q^{\ell n}$ choices for input list tuples and the at most $q^{(L+1) \cdot h_{q,\ell}(\rho) \cdot n}$ many $(L+1)$-sized subsets of a given radius $\rho$ list-recovery ball (this uses the upper bound from \eqref{eq:list-rec-ball-estimate}), one finds that so long as $R \leq 1-h_{q,\ell}(\rho) - \eps$ the probability the code fails to be $(\rho,\ell,L)$-list-recoverable is at most $q^{-\eps n}$. Crucially, we use here that the events $x \in C$ are \emph{independent} for different $x \in \Sigma^n$, so for any $(L+1)$-sized $X \subseteq \Sigma^n$, $\Pr[X \subseteq C] = q^{(R-1)n(L+1)}$.
		
		The impossibility side is established via a standard averaging argument. Consider a uniformly random choice for $(S_1,\dots,S_n) \in \binom{\Sigma}{\ell}^n$; the expectation $\mb E|C \cap B_S(\rho)|$ can be lower bounded by $|C| \cdot |B(S_0,\rho)|/q^n$, where $S_0 \in \binom{\Sigma}{\ell}^n$ is any choice of input list tuple (note that the size of a list recovery ball is independent of the center). By the lower bound from \eqref{eq:list-rec-ball-estimate}, when $R \geq 1-h_{q,\ell}(\rho)+\eps$ one finds the expectation is $\geq q^{\eps n}/3\sqrt{n}\ge q^{\epsilon n/2}$. 
	\end{proof}
	
	For convenience, we now introduce the notation $\Rcor(\rho,\ell)$ for the capacity $1-h_{q,\ell}(\rho)$. 
	
	For list recovery \emph{from erasures}, one can prove a capacity theorem via a completely analogous argument: for $(\rho,\ell,L)$-list recovery from erasures, the capacity is given by
	\begin{align}
		\Rera(\rho,\ell):= 1-\rho-(1-\rho)\log_q\ell \ . \label{eq:erasures-capacity}
	\end{align}
	Here, rather than \eqref{eq:list-rec-ball-estimate}, one uses the following simple identity: if $S_1,\dots,S_n \subseteq \Sigma$ which have size $\ell$ for $(1-\rho)$ fraction of the $i$, and otherwise have size $q$, then
	\[
	|S_1 \times \cdots \times S_n| = \ell^{(1-\rho)n} \cdot q^{\rho n} = q^{(\rho + (1-\rho)\log_q\ell)n}
	\]
	
	In both cases, we abbreviate $\Rcor$ and $\Rera$ when $\rho,\ell$ are clear from context. 
	
	\subsection{Meditation on the capacity theorem}
	
	Here, we list a few more observations around the capacity theorem.
	
	\paragraph*{The entropy function.}  An important fact about the $(q,\ell)$-entropy function is that on the interval $(0,1-\ell/q)$ the function increases from $0$ to $1$. Thus, so long as $\rho < 1-\ell/q$ there exist positive rate $(\rho,\ell,L)$-list recoverable codes. For $\rho > 1-\ell/q$ such codes do not exist. See \Cref{fig:ql-entropy}.
	
	\begin{figure*}[h]
		\centering
		\includegraphics[scale=0.7]{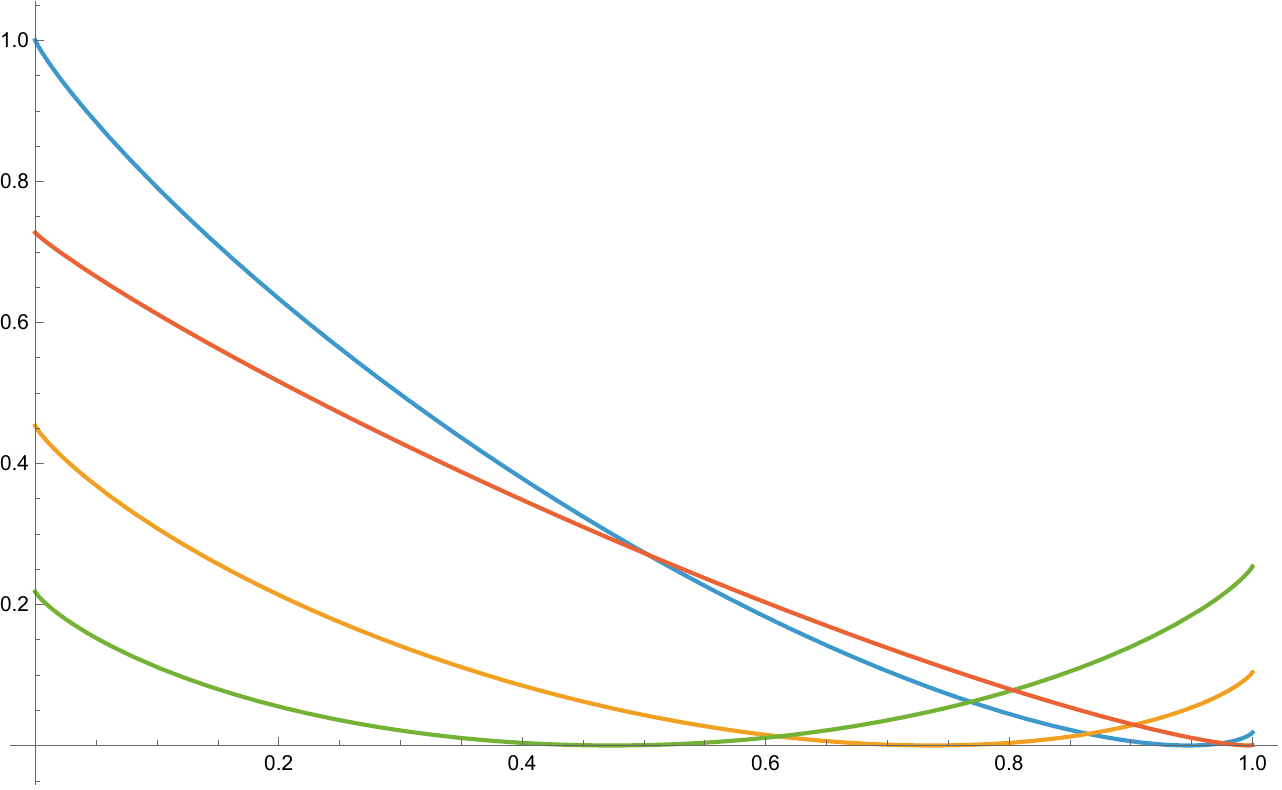}
		\caption{Plots of $1-h_{q,\ell}(x)$ for various values of alphabet $q$ and input list size $\ell$. In blue, $q=19$ and $\ell=1$; in orange, $q=19$ and $\ell=5$; in green, $q=19$ and $\ell=10$; and in red, $q=2048$ and $\ell=8$. Observe that $1-h_{q,\ell}(0) = 1-\log_q\ell$ and $1-h_{q,\ell}(1-\ell/q)=0$, and that $1-h_{q,\ell}$ decreases monotonically between these endpoints. Observe further that when $q$ is very large compared to $\ell$ (cf. the red line), one obtains essentially a straight line with $y$-intercept $1-\log_q\ell$ and $x$-intercept $1-\ell/q$.}\label{fig:ql-entropy}
	\end{figure*}
	
	\paragraph*{Achieving capacity and the Elias bound.} Based on \Cref{thm:LR-capacity-theorem}, we will introduce the following terminology: if a code of rate $R = 1-h_{q,\ell}(\rho)-\eps$ is $(\rho,\ell,L)$-list recoverable with $L \leq \poly(n)$ (viewing all other parameters $\rho,\ell,q,\eps$ as constants, where $\eps>0$ can be taken arbitrarily small), then we will say the code \tsf{achieves list recovery capacity}. If additionally the output list size $L$ can be taken as $O(\ell/\eps)$ (which was the case for PRC's), then we will say the code \tsf{achieves the Elias bound}.
	
	\paragraph*{Two disjoint regimes for alphabet size.}  As we shall see, all the three fronts of list recovery (the good, the bad, and the unknown) have significantly different forms depending on whether the alphabet is `large' or `small'.  This disparity can be traced back to the fundamentally different behavior of the \((q,\ell)\)-entropy function in the cases when \(q\) is `large' or `small'.  Let us make this precise.  Separating a linear term from \(h_{q,\ell}(x)\), we get
	\begin{align*}
		h_{q,\ell}(x)&=x+x\log_q\bigg(\frac{1-(\ell/q)}{x}\bigg)\\
		&\quad+(1-x)\log_q\bigg(\frac{\ell}{1-x}\bigg).
	\end{align*}
	It can be checked that for any \(\epsilon\in(0,1/2)\), if we set \(q\ge(\ell+1)^{1/\epsilon}\), then we have
	\[
	|h_{q,\ell}(x)-x|\le\epsilon\cdot\frac{\ln(2(\ell+1))}{\ln(\ell+1)}\le2\epsilon
	\]
	for all \(x\in(0,1-\ell/q)\).  Note that in this case, the chosen \(\epsilon\) is already within the range assumed in~\cref{thm:LR-capacity-theorem}.  Therefore, the possibility result corresponds to rate at most \(1-\rho-3\eps\), and the impossibility result corresponds to rate at least \(1-\rho-2\eps\).  In light of this, the common consensus is that the \tsf{large alphabet regime} is when \(q\ge\ell^{\,\Omega(1/\epsilon)}\), and \(h_{q,\ell}(x)\) is uniformly \(O(\epsilon)\)-close to \(x\), and the complementary case is the \tsf{small alphabet regime}. 
	
	\emph{We are now ready for launch... T minus two, T minus one, and liftoff!}
	
	
	\section{List recovery of structured codes: the good, and the bad}
	
	
	
	By~\cref{thm:LR-capacity-theorem}, we know that for any \(\rho\in(0,1-\ell/q)\), a PRC having rate at most \(\Rcor-\eps\) is \((\rho,\ell,O(\ell/\eps))\)-list recoverable with high probability, and an analogous result holds for list recovery from erasures. In particular, \emph{some code} exists with these parameters. However, plain random codes have no structure that one could hope to exploit -- indeed, even describing succinctly such an exponentially large object is in general impossible, let alone efficient encoding or decoding. 
	
	In light of this, researchers have analyzed whether or not \emph{structured} codes can achieve the same list recoverability as PRCs. The most popular ``structure'' that researchers have investigated is linearity: namely, one considers the behaviour of a random \emph{linear} code. As motivation, many code operations (such as concatenation or tensoring) and applications (as we survey in \Cref{sec:beyond-coding}) require linear codes, and additionally in some of these scenarios simply an existential result demonstrating list recoverable linear codes exist is sufficient. In this section we will survey what is known: in certain cases, we will see a price to pay for linearity; in other cases, this price is significantly reduced (potentially even to zero). 
	
	\subsection{Linear codes: Good? Bad? A bit of both?}\label{sec:ZP}
	
	
	
	
	To begin this discussion, it is worth sharing an argument due to Zyablov and Pinsker~\cite{zyablov-pinsker-1981}, which was originally provided for analyzing the list-decodability of RLCs but generalizes easily to list-recovery as well. Recall that for a PRC $C$ of rate $R$, we pointed out that for any $(L+1)$-sized subset $X$, we had $\Pr[X \subseteq C] = q^{(R-1)n(L+1)}$. For an RLC this is no longer the case: for example, observe that if $x,y \in C$ then necessarily $x+y \in C$, and additionally $\lambda x \in C$ for any $\lambda \in \F_q$. 
	
	But all is not lost: note that if a set $X \subseteq \F_q^n$ is \emph{linearly independent}, then we indeed have $\Pr[X \subseteq C] = q^{-(R-1)n|X|}$. Zyablov and Pinsker now exploit the fact that in any subset $X \subseteq \F_q^n$ of size $L+1$, at least $\log_q(L+1)$ of them are linearly independent. Thus, we can say (pessimistically) that if $X$ is a set of size $L+1$, then $\Pr[X \subseteq C] \leq q^{-(1-R)n\log_q(L+1)}$. By making $L+1$ exponentially larger -- namely, $q^{\ell/\eps}$ -- than we needed for PRCs, we can conclude that the RLC is with high probability $(\rho,\ell,L)$-list recoverable.\footnote{Again, the same idea applies to list recovery from erasures.}
	
	Now, this argument already establishes something quite positive: we can say that RLC's \emph{achieve list recovery capacity}. However, we cannot make the stronger conclusion that they achieve the \emph{Elias bound}. We will now survey some results below which attempt to determine when an RLC can and cannot have smaller output list size $L$. 
	
	But before diving into the precise results, we introduce a recent technique that allows for a deeper probing of combinatorial properties of RLCs. 
	
	
	\subsection{The new player in town: a method of \emph{forbidden structures}}

	A basic heuristic that drives several lines of investigation is that desired properties can be characterized by the absence of certain \emph{forbidden structures}. Pursuing this heuristic has led to some amazing technical developments like the theory of \emph{forbidden minors} in matroid theory~\cite{oxley-1989-excluded-minor,geelen-2000-excluded-minors,mayhew-2012-rota}, the recently emerging \emph{container method}~\cite{balogh-morris-samotij-2018-hypergraph-containers,balogh-samotij-2020-container-lemma,zamir-2023-containers}, Ramsey theory~\cite{conlon-fox-sudakov-2015-ramsey,robertson-2021-ramsey}, the study of forbidden subgraphs in random graphs~\cite{babai-simonovits-spencer-1990-random-graphs,balogh-bollobas-simonovits-2004-forbidden-subgraphs,haviv-2019-forbidden-subgraphs}, etc.  Here, we shall see another concrete variant of this heuristic tailored towards capturing forbidden structures defined by local constraints in linear codes.  In fact, depending on whether the alphabet size is large or small, there are two different realizations.
	
	\subsubsection*{The method of Local Coordinatewise Linear (LCL) properties (over large alphabet)}
	
	We begin with the large field case, and explain how local coordinate-wise linear (LCL) properties capture list recoverability. In fact, this tool applies to a broader class of \emph{local} properties.\footnote{This terminology may cause confusion with topics like locally-testable codes, where one only queries a few points from a codeword before making a decision/performing a computation. Such topics are in fact not \emph{local} properties in the sense that we use here.} Here is the idea: note that to demonstrate that a code is list recoverable (either from corruptions or erasures), one must show that whenever one is given a ``bad'' set of $L+1$ vectors (say, all lying in a radius $\rho$ list recovery ball) it is \emph{not} the case that they are all codewords. 
	
	Now, take this ``bad'' $(L+1)$-sized set, and make a $(L+1)\times n$ matrix $A$ from it. Note that the coordinates of $A$ must ``agree a lot,'' at least if $\ell \ll L+1$ and $\ell \ll q$. This is easiest to see in the case of zero-error list recovery: whereas we could have up to $L+1$ different values in per column of $A$, in fact there must be at most $\ell$ in such a case. Looking at each column of $A$, we can in fact consider a partition of $[L+1]$ into $\ell$ parts, where two coordinates are in the same part if they take on the same value. The vectors $v \in \F_q^{L+1}$ satisfying these constraints in fact form a linear space. That is, for each $i \in [n]$, we can consider a subspace $V_i \leq \F_q^{L+1}$; the $n$-tuple $\mc V = (V_1,\dots,V_n)$ is then called a \emph{linear profile}.\footnote{These ideas generalize readily to list recovery from corruptions or erasures.} 
	
	From the above discussion we see that list recovery is defined by forbidding some family $\mathcal F$ of local profiles $\mc V$, where we say that a code $C$ contains a local profile $\mc V = (V_1,\dots,V_n)$ if it contains the rows of a matrix $A \in \F_q^{(L+1) \times n}$, where for each $i \in [n]$ the $i$-th column of $A$ lies in $V_i$. Any property defined by forbidding such a family of local profiles is called \tsf{locally coordinate-wise linear (LCL)}. 
	
	The important fact Levi, Mosheiff and Shagrithaya~\cite{LMS24} establish is that all LCL properties undergo a ``threshold phenomenon.'' Informally speaking, this means that given any LCL property $\mc P$ (e.g., $(\rho,\ell,L)$-list recovery) with family of local profiles $\mathcal F$, there exists a rate $R_{\mc P} \in [0,1]$ such that: 
	\begin{itemize}
		\item if $R < R_{\mc P}$ then a random linear code of rate $R$ will satisfy $\mc{P}$ with probability $1-|\mc F| \cdot q^{-\Omega(n)}$;
		\item if $R > R_{\mc P}$ then a random linear code of rate $R$ will \emph{not} satisfy $\mc{P}$ with probability $1-q^{-\Omega(n)}$. 
	\end{itemize}
	Now, for $(\rho,\ell,L)$-list recovery (with $\ell,L = O(1)$) one can crudely bound the relevant family $\mc F$ by $|\mc F| \leq 2^{O(n)}$; hence, by choosing $q$ to be large enough constant one can guarantee that the failure probability in the first bullet-point is indeed $q^{-\Omega(n)}$. For this reason, this approach is (as yet) only successful in the ``large alphabet regime.''
	
	Using this approach, the authors we able to derive exactly the threshold rate for $(\rho,L)$-list decoding (which we recall is the $\ell=1$ case of list recovery): it is the generalized Singleton bound, namely, $1-\rho\cdot(1+\frac1L)$.\footnote{There is an important distinction to be made between \emph{exactly} achieving the generalized Singleton bound (namely, having rate $1-\rho(1+\frac1L)$) and getting $\eps$-close (i.e., having rate $1-\rho(1+\frac1L)-\eps$). In particular, the field size must be exponentially large $2^{\Omega(n)}$ in the former case, but can be $2^{O(1/\eps)}$ in the latter. But we do not focus on these details here.} 
	
	\begin{remark}\label{rem:random-RS}
		In fact, even prior to~\cite{LMS24} it had been established that linear codes can achieve the generalized Singleton bound. Specifically, random Reed-Solomon codes had been shown to achieve this~\cite{BGM23,GZ23,AGL24}. This is a beautiful line of work, but a bit outside the scope of our survey, as these techniques have not (yet) been successfully applied to the case of list recovery. But, in light of the negative results we are about to present, that is perhaps unavoidable.
		
		Additionally, we would be remiss if we did not mention that a particularly technically involved portion of~\cite{LMS24} is devoted to proving that RLCs and random Reed-Solomon codes are \emph{locally equivalent}, which means that the threshold rate $R_{\mc P}$ for both of these random ensembles are the same. In particular, this means that if an RLC of rate $R$ is with high probability $(\rho,\ell,L)$-list recoverable, then the same is true for a RRS code.\footnote{There are some subtleties with the field size involved, but morally the above conclusion holds.} Thus, we now have two means of establishing that random Reed-Solomon codes achieve the generalized Singleton bound for list decoding. 
	\end{remark}
	
	\subsubsection*{The method of types (over small alphabet)}
	
	While LCL properties have been devised for analyzing random linear codes over large alphabets (and additionally, Reed-Solomon codes), it at least appears that such techniques will not be effective for small fields.
	
	As with LCL properties, we view sets of vectors of size $b$ as $b \times n$ matrices. Previously, we classified these matrices based on the linear dependencies they satisfy. Now, we take a more combinatorial, or information-theoretic, perspective: we classify them based on their empirical \emph{column} distribution. That is, let $\tau$ be the distribution over $\F_q^b$ obtained by sampling a uniformly random $i \in [n]$ and then outputting the $i$-th row of the matrix; a distribution obtained in this way is called a \emph{type}.\footnote{This notion is heavily inspired by the concept of \emph{typical sequences} as developed in information theory (and, in fact, can be viewed as a special case thereof). See, for instance, \cite[Chapter~3]{CT99}.} 
	
	Firstly, observe that the distributions obtained obtained in this way are in one-to-one correspondence with partitions of $[n]$ into $q^b$ sets, so there are at most 
	\[
	\binom{n+q^b-1}{q^b-1} \leq (n+1)^{q^b}
	\]
	such types. Note that if $q$ and $b$ are constants, then the above bound is actually \emph{polynomial} in $n$. But if $q$ is a growing parameter this bound quickly becomes unwieldy; hence the motivation for developing the LCL properties introduced above. 
	
	Now, as with LCL properties it is easy on can view local properties like list recovery as defined by forbidding certain families of types; see, for instance, \cite[Section~3.2]{resch2020}. Secondly, one can establish a \emph{threshold phenomenon} for all such types, just as with LCL properties. 
	
	Unfortunately, unlike in the case of LCL properties the method of types has only modest success in proving possibility results. In brief, while Mosheiff et al~\cite{MRRSW21} do characterize the threshold rate for all local properties, it is characterized as the value of an optimization, which unfortunately appears in general to be quite difficult to evaluate analytically (it does not even appear to be efficiently computable). Only for the following special cases has the threshold been computed exactly:
	\begin{itemize}
		\item $(\rho,2)$-list decoding with $q=2$~\cite{GMRSW21} and larger $q$~\cite{RY24} with $\rho < 1/3$. 
		\item $(\rho,3)$-list decoding with $q=2$~\cite{RY24}. 
	\end{itemize}
	However, as we will survey below, there has been some success using this framework for proving \emph{impossibility} results for list recovery of random linear codes. 
	
	\begin{remark}
		We would like to now clarify a potential point of confusion. Recall the capacity theorem \Cref{thm:LR-capacity-theorem}: it can be interpreted as answering the following question: ``what is the maximum rate of \emph{any} code that is $(\rho,\ell,L)$-list recoverable with $L$ \emph{not too large}?'' The threshold rates as defined above answer the following question: ``what is the maximum rate such that \emph{a random} linear code is likely $(\rho,\ell,L)$-list recoverable \emph{for a given $L$}?'' Thus, the capacity theorem gives coarser information, in the sense that it applies it only determines when list-recovery with small $L$ is possible, although it is stronger in that it applies to all codes, rather than a random ensemble of codes. 
	\end{remark}
	
	\begin{remark}
		While this is not the focus of our article, it is worth mentioning that the framework of types has been used fruitfully to show certain code ensembles are \emph{locally similar} to random linear codes. While we do not define this notion formally, it essentially boils down to demonstrating that the random code of interest contains constant sized subsets of vectors with roughly the same probability that a random linear code would. This is enough to show that the locally similar code has essentially the same threshold rate for all local properties (including list recovery). This was an original motivation of Mosheiff et al~\cite{MRRSW21}: they showed random low-density parity-check codes are locally similar to random linear codes. Followup works~\cite{GM22,PP24,MRSY24} have extended the list of code ensembles locally similar to random linear codes. 
	\end{remark}
	
	\begin{remark}
		For the case of PRCs of rate $R$, Guruswami et al~\cite{GMRSW21} also provide a toolkit for determining their threshold for local properties such as list recovery. In this case, it is much easier to evaluate the threshold rate (and in fact this work gives an efficient algorithm is given to compute it for list recovery). However it is quite likely that some codes do \emph{better} than plain random codes: indeed, this is already the case for minimum distance (i.e., $(\rho,1,1)$-list recovery), where random linear codes do better than plain random codes. (This is typically rectified by ``expurgating'' the plain random code, but we do not do that here.)
	\end{remark}
	
	\subsection{A long way with a few symbols}\label{sec:small-alphabet}
	
	We will now survey some results concerning the list recoverability of linear codes in the small alphabet regime. Here, the main focus will be \emph{random} linear codes (RLCs). We begin with \emph{the bad}, i.e., impossibility results. Afterwards we will share \emph{the good} (possibility results). Along the way, we will share some open problems that we consider to be interesting and approachable. 
	
	\subsubsection*{Impossibility results}
	
	The first result demonstrating an impossibility result for list recovery of linear codes is due to Guruswami et al~\cite{guruswami-li-moshieff-resch-2022-list-decoding-list-recovery-random-linear-codes}. Informally, they showed that if one is interested in list recovery \emph{from erasures} and the field has small characteristic, then random linear codes at capacity require exponentially large output lists. More precisely, suppose $q = \ell^t$ for some integer $t>1$, and that $\ell$ is a prime power (thus, $\F_q$ contains the finite field $\F_\ell$ as a subfield). Let $C$ be a random linear code for rate $R = 1-\rho - (1-\rho)\cdot \frac{1}{t}-\eps$; that is, $C$ is $\eps$-close to capacity (recall \eqref{eq:erasures-capacity} and note $\log_q\ell = 1/t$). Then (for small enough $\eps>0$) Guruswami et al~\cite{guruswami-li-moshieff-resch-2022-list-decoding-list-recovery-random-linear-codes} prove that with high probability a random linear code is \emph{not} $(\rho,\ell,\ell^{o(1/\eps)})$-list recoverable from erasures. This is in sharp contrast to \emph{plain} random codes, where lists of size $O(\ell/\eps)$ are sufficient. Thus, we see here a \emph{heavy price to pay for linearity!}
	
	What went wrong? For simplicity let us consider the case of $\rho=0$, i.e., zero-error list recovery. Intuitively, when $\F_q$ is an extension of $\F_\ell$, there are many ``somewhat linear'' input list tuples $(S_1,\dots,S_n)$ that the random linear code $C$ must avoid. For example, consider the input list tuple where each $S_i = \F_\ell$, or more generally some coset $\beta_i\cdot \F_\ell := \{\beta \cdot \alpha:\alpha \in \F_\ell\}$ for $\beta_i \neq 0$. Note that as $C$ is $\F_q$-linear by construction, it is also closed under $\F_\ell$ linear combinations, so if $x,y \in C \cap (\beta_1\F_\ell \times \cdots \times \beta_n \F_\ell)$, then additionally $\gamma x + \delta y \in C \cap (\beta_1\F_\ell \times \cdots \times \beta_n \F_\ell)$ for all $\gamma,\delta \in \F_\ell$. Extending this line of reasoning, once $C \cap (\beta_1\F_\ell \times \cdots \times \beta_n \F_\ell)$ contains $\Omega(1/\eps)$ $\F_\ell$-linearly independent vectors, it will follow that $|C \cap (\beta_1\F_\ell \times \cdots \times \beta_n \F_\ell)| \geq \ell^{\Omega(1/\eps)}$. Exploiting the method of types, Guruswami et al~\cite{guruswami-li-moshieff-resch-2022-list-decoding-list-recovery-random-linear-codes} successfully establish that this is indeed likely to happen, establishing the impossibility result for list recovery. 
    
    Before continuing, we mention the following open problem. 
    \begin{question}
        Does the same lower bound $L \geq \ell^{\Omega(1/\eps)}$ apply to \emph{every} linear code?
    \end{question}
    As we see in \Cref{thm:LR-outlist-lower-bound}, in the large field case such a lower bound indeed applies to \emph{every} linear code; in fact, it applies for list recovery from erasures \emph{and} corruptions, and moreover over \emph{all} (large enough) fields.\footnote{Perhaps unsurprisingly, the proof of \Cref{thm:LR-outlist-lower-bound} (which we reproduce) shares many similarities with the argument sketched above.} Here, we emphasize that the lower bound of~\cite{guruswami-li-moshieff-resch-2022-list-decoding-list-recovery-random-linear-codes} \emph{only applies} to list recovery from erasures and only if the field size is an integer power of the input list size; if, say, $q$ is prime then this lower bound does not apply. 
	
	Next, we can consider what happens for list recovery \emph{from corruptions}. In this case, Resch and Yuan~\cite{RY24} show that for RLCs of rate $\Rcor-\eps$ the list size cannot be taken smaller than 
	\begin{align}
		\frac{\log_q\binom q\ell -(1-h_{q,\ell}(\rho))}{\eps} \ . \label{eq:list-rec-linear-lb}
	\end{align}
	This argument again uses the method of types, and in fact generalizes an argument given by Guruswami et al~\cite{guruswami-li-moshieff-resch-2022-list-decoding-list-recovery-random-linear-codes} that only applied to list decoding. We remark that this is of order $O(\ell/\eps)$; that is, this lower bound does not demonstrate any price to pay for linearity. 
	
	In fact, Resch and Yuan~\cite{RY24} conjecture that this bound is \emph{tight} for RLCs. Furthermore, they show that for plain random codes (PRCs) the list size must be (with high probability) essentially
	$\frac{\log_q\binom q\ell}{\eps}$.
	That is, if this conjecture is correct, RLCs will actually have \emph{better} list recovery than PRCs!
	
	While this might appear surprising at first, this is in fact in line with what happens with the Gilbert-Varshamov (GV) bound. Recall that the GV bound states that there exist codes with rate $R = 1-h_q(\delta)-\eps$ that have minimum distance $\delta$. One way to prove this is to consider a RLC of rate $R$ and argue that, with high probability, it does not contain a nonzero vector of weight $\leq \delta n$. If one considers a PRC, it is not sufficient to just consider low weight codewords: one must consider pairs of nearby vectors, and argue none of these pairs are contained in the code. This requires a union bound over significantly more ``bad events,'' and because of this one cannot choose the rate as large.\footnote{This can be fixed by ``expurgating'' the code: namely, throwing out a small number of codewords to fix the minimum distance without significantly harming the rate. But this expurgated code is no longer a ``plain'' random code.} Thus, for the simple local property of ``minimum distance $\delta$'' -- i.e., $(\delta/2,1,1)$-list recoverability -- RLCs perform better than PRCs. 
	
	Additionally, for the case of list decoding over the binary alphabet, Li and Wootters~\cite{LW20} establish that RLCs perform better than PRCs. The intuition is that for linear codes, centers that differ by a codeword behave the same in terms of their list size: if $z-z'\in C$, then $|B(z,\rho) \cap C| = |B(z',\rho) \cap C|$. This cuts down on the size of the union bound roughly by a factor $q^{Rn}$, which allows one to then choose the list size $L$ smaller (roughly, one can subtract $R/\eps$ from $L$). Thus, the available evidence is that, over small alphabets, RLCs might be best for list recovery. 
	
	On this positive note, let's now turn to some possibility results. 
	
	\subsubsection*{Possibility results}
	
	Until recently, except for the Zyablov-Pinsker argument we did not have \emph{any} positive results for the small alphabet regime, at least if one insists on codes $\eps$-close to capacity. This has now changed as Doron et al~\cite{DMRR25} -- following the approach H{\aa}stad, Guruswami and Kopparty~\cite{GHK10} devised for list-decoding -- have established that RLCs of rate $1-h_{q,\ell}(\rho)-\eps$ are whp $(\rho,\ell,C_{\rho,\ell,q}/\eps)$-list recoverable (from corruptions). Thus, if one assumes $q$ (and hence $\ell$) to be constant (and furthermore that $\rho$ is bounded away from $1-\ell/q$) then in fact the output list size is not significantly worse than what we know to be possible existentially, at least in terms of the gap-to-capacity $\eps$. In a nutshell: in the small alphabet regime, up to constants there is \emph{no price for linearity}!
	
	Now, it would be nice to have $C_{q,\ell,\rho}$ close to the lower bound of \eqref{eq:list-rec-linear-lb}; in particular, at most $O(\ell)$. Unfortunately the argument of Doron et al~\cite{DMRR25} is unable to establish this: they obtain roughly $C_{q,\ell,\rho} = q^{O(\ell \log^{O(1)} q)}$ (assuming $\rho$ is not too close to $0$ or $1-\ell/q$). Still, assuming $q \leq 2^{(1/\eps)^c}$ for some small universal constant $c>0$, this improves upon the list-size guaranteed by Zyablov-Pinsker (which we recall was $q^{\ell/\eps}$). 
	
	Next, Doron et al~\cite{DMRR25} consider list recovery of erasures. In light of the result of Guruswami et al~\cite{guruswami-li-moshieff-resch-2022-list-decoding-list-recovery-random-linear-codes}, one might suspect that there is nothing positive one can say: $L$ must be exponential in $1/\eps$ for RLCs $\eps$-close to capacity. However, recall the argument of Guruswami et al~\cite{guruswami-li-moshieff-resch-2022-list-decoding-list-recovery-random-linear-codes} heavily used the assumption that $q$ was an integer power of $\ell$, which itself was assumed to be a prime power. However, if $q$ is, say, prime, this argument completely breaks down. Doron et al~\cite{DMRR25} show this is \emph{inherent}: an RLC $\eps$-close to capacity over a \emph{prime} field is with high probability $(\rho,\ell,L)$-list recoverable from erasures with $L = C_{\rho,\ell,q}/\eps$. The bound on $C_{\rho,\ell,q}$ is roughly $q^{O(\ell \log q})$ (assuming $\ell\leq0.99 q$, say), which is quite far from $O(\ell)$ -- that is, this argument does \emph{not} show RLC's achieve the Elias bound -- but it does again show that $L$ can just have linear dependence on $1/\eps$. 
	
	To briefly explain the argument, recall again the Zyablov-Pinsker argument: namely, for every subset $X$ of size $L+1$ of $B_\rho(S)$ (for list recovery from corruptions) or a combinatorial rectangle $S_1 \times \cdots \times S_n$ (for erasures), they conclude that $\Pr[X \subseteq C] \leq q^{-(1-R)\log_q(L+1)}$. This bound is tight if the subset $X$ has a lot of ``linear structure,'' i.e., if it looks a lot like a subspace. Doron et al~\cite{DMRR25} show that $B_\rho(S)$ and $S_1 \times \cdots \times S_n$ (the latter only when $q$ is prime) do not have much linear structure. More precisely, they argue that if one takes two independent random samples from these sets, any linear combination of these samples is unlikely to again lie in a fixed shift of these sets. Once this is established, they could follow ideas of Guruswami, H{\aa}stad and Kopparty~\cite{GHK10} to obtain their list recovery results. 

    Finally, recalling the lower bound from \eqref{eq:list-rec-linear-lb}, the following question is natural.

    \begin{question} \label{ques:list-size-small-alpha}
        Is the lower bound \eqref{eq:list-rec-linear-lb} tight? More modestly, can one establish that $L=O(\ell/\eps)$ is sufficient for RLCs (namely, that they achieve the Elias bound)? Even more modestly, any bound that has just \emph{polynomial} dependence on $\ell$?
    \end{question}

	\subsection{A long(er) way with many symbols}\label{sec:large-alphabet}
	
	We will now survey some more recent results in the setting of large fields.  Historically, an advantage we have had in this setting is a lot more explicit constructions that continue to get better.  A representative class of such constructions are the popular families of polynomial codes.  We do not cover results on these in detail, and so will only define them informally.

    Fix a finite field \(\F_q\), and take distinct points \(\alpha_1,\ldots,\alpha_n\in\F_q\).  For any \(k\in[n]\), the \tsf{Reed-Solomon (RS) code} consists of message space \(\F_q^k\) interpreted as the space of coefficient vectors \((f_0,\ldots,f_{k-1})\) of polynomials \(f(X)=f_0+f_1X+\cdots+f_{k-1}X^{k-1}\), and the corresponding codewords are evaluation vectors \((f(\alpha_1),\ldots,f(\alpha_n))\).  It is elementary to see that these codes are MDS, making them prime candidates for investigating list decodability and list recoverability.  As it turns out, we know that all RS codes can be algorithmically list recovered up to the \emph{Johnson bound}~\cite{guruswami-sudan-1998-RS-list-decoding} with constant output list size, but this radius is significantly smaller than capacity.  We know more (combinatorially) when the evaluation points are chosen randomly (cf.~\cref{rem:random-RS}), but for algorithmic progress, the better families of codes (as of now) are the \emph{folded variants} -- Folded Reed-Solomon (FRS) codes and multiplicity codes.

    For the folded variants, there is an additional folding parameter \(s\ge2\).  The \tsf{FRS code} again consists of a similar message space, but the codewords are formed by larger symbols \((f(\alpha_i),f(\gamma\alpha_i),\ldots,f(\gamma^{s-1}\alpha_i)),\,i\in[n]\), where \(\gamma\in\F_q^\times\) is a fixed element with high multiplicative order.  The \tsf{multiplicity code} also consists of a similar message space, but the codewords are formed by larger symbols \((f(\alpha_i),f'(\alpha_i),\ldots,f^{(s-1)}(\alpha_i)),\,i\in[n]\).  The intuition obtained by the better algorithms~\cite{guruswami-rudra-2008-FRS,kopparty-2015-multiplicity-code,guruswami-wang-2013-FRS,kopparty-ron-zewi-saraf-wootters-2023-list-decoding} is that \emph{sufficiently large folding enables better algorithmic decoding}.
	
    To get a sense of what is possible, it has been established that FRS codes of rate $1-R-\eps$ are $\big(\rho,\ell,(\ell/\eps)^{O(\frac{\log\ell}{\eps})}\big)$-list recoverable~\cite{kopparty-ron-zewi-saraf-wootters-2023-list-decoding,tamo-2023-tighter-list-size} (and furthermore such a code comes equipped with an efficient list recovery algorithm). Could we hope to do better -- \emph{lower the upper bound}? 

    Complementary to upper bounds, a key turning point in the state-of-the-art was the recent work of~\cite{chen-zhang-2024-FRS-list-size}, which gives a lower bound \(\ell^{\,\Omega(1/\epsilon)}\) on the output list size for the Reed-Solomon code and its folded variants.  Once again, could we hope to do better -- \emph{raise the lower bound}?
	
	
	
	\subsubsection*{Lower bounds}
	
    Interestingly, for the list recovery of polynomial codes, we barely knew of any lower bounds for a very long time.  The situation has changed drastically changed in the last couple of years, and we focus on this.  We will show a recent lower bound on output list size for list recovery of \(\mb{F}_q\)-linear codes, which is by far the most important recent development of late.  Complementary to what we discussed in~\cref{sec:small-alphabet}, this result shows that there is a \emph{price to pay for linearity} over large fields too!  This argument is due to~\cite{LS25} for list recovery up to capacity, but we make some minor modifications and present it for the case of zero-error list recovery.  Note that a lower bound for zero-error list recovery is also a lower bound for list recovery up to any other radius; so technically, this is a stronger result.
	\begin{theorem}[{\cite{LS25}}, adapted to zero-error list recovery]\label{thm:LR-outlist-lower-bound}
		Consider the finite field \(\mb{F}_q\) with \(q=\ell^{\,t}\).\footnote{To clarify, we do not assume \(\ell\) is a prime or a prime power.  In full generality, we have a prime \(p\), and a prime power \(q=p^r,\,r\ge1\), which can then be written as \(q=p^r=\ell^t\), where \(t>1\) is a real number and \(\ell\ge2\) is an integer.}  For any \(\epsilon\in\big[0,\frac{1}{2}-\frac{1}{t}\big]\), and \(n\ge n_0(t,\epsilon)\) sufficiently large, if \(C\subseteq\mb{F}_q^n\) is an \(\mb{F}_q\)-linear code of rate \(1-\frac{1}{t}-\epsilon\) that is \((0,\ell,L)\)-list recoverable, then \(L\ge\ell^{\,\frac{1}{2}\min\{t,1/\epsilon\}}\).
	\end{theorem}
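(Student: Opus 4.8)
The plan is to prove the statement by exhibiting, for the given $C$, a single tuple of input lists witnessing a large output list; most of the work is a purely linear‑algebraic existence claim. Let $p$ be the characteristic of $\F_q$; since $\ell^{t}=q=p^{r}$ forces $\ell$ to be a power of $p$, write $\ell=p^{a}$, so that $t=r/a$ and $\dim_{\F_p}\F_q=r$. It then suffices to find $\F_p$-subspaces $V_{1},\dots,V_{n}\le\F_q$, each of dimension $a$ (hence of size $\ell$), with
\[
\dim_{\F_p}\bigl(C\cap(V_{1}\times\cdots\times V_{n})\bigr)\ \ge\ \tfrac{a}{2}\min\{t,1/\epsilon\}.
\]
Indeed $C\cap(V_{1}\times\cdots\times V_{n})$ is an $\F_p$-subspace (an intersection of two $\F_p$-subspaces of $\F_q^{n}$), so with $S_{i}:=V_{i}$ one has $|C\cap(S_{1}\times\cdots\times S_{n})|=p^{\dim_{\F_p}(C\cap\prod_{i}V_{i})}\ge\ell^{\frac{1}{2}\min\{t,1/\epsilon\}}$; since $B_{0}(S)=S_{1}\times\cdots\times S_{n}$, any $(0,\ell,L)$-list-recoverable code must then satisfy $L\ge\ell^{\frac{1}{2}\min\{t,1/\epsilon\}}$.

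The obstacle is that the ``one-shot'' form of this is exactly vacuous at capacity: the crude dimension count gives only
\[
\dim_{\F_p}\bigl(C\cap\prod_{i}V_{i}\bigr)\ \ge\ \dim_{\F_p}C+\dim_{\F_p}\prod_{i}V_{i}-rn=(r-a-\epsilon r)n+an-rn=-\epsilon rn,
\]
which is non-positive, and a uniformly random choice of the $V_{i}$ likewise has a dimension-$0$ intersection with $C$ in expectation. So there is no single global move, and the argument must be iterative.

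Hence I would build the intersection one $\F_p$-dimension at a time, following the template of the Zyablov--Pinsker argument~\cite{zyablov-pinsker-1981} and of the argument of Guruswami et al.~\cite{guruswami-li-moshieff-resch-2022-list-decoding-list-recovery-random-linear-codes} recalled earlier, but run deterministically inside the fixed code rather than through the method of types. Concretely, maintain a flag $\{0\}=W_{0}\subsetneq W_{1}\subsetneq\cdots\subsetneq W_{b}\subseteq C$ of $\F_p$-subspaces together with partially committed coordinate subspaces, under the invariant $\dim_{\F_p}\pi_{i}(W_{j})\le a$ for all $i$. Extending $W_{j}$ to $W_{j+1}$ amounts to locating a codeword outside $W_{j}$ whose $i$-th coordinate lies in $\pi_{i}(W_{j})$ wherever that subspace is already $a$-dimensional (and may enlarge it elsewhere); such a codeword exists until a ``budget'' is exhausted, and this budget is governed simultaneously by the slack $\epsilon$ below capacity (supporting roughly $1/\epsilon$ productive rounds) and by the extension degree $t$ (capping the number of rounds at roughly $t$), whichever binds first -- yielding the $\min\{t,1/\epsilon\}$; a factor of $2$ is lost in the bookkeeping, e.g.\ in a second-moment (Cauchy--Schwarz) step converting average information into a pointwise bound, or because enlarging the rectangle at one coordinate forces commitments at others. (Equivalently, one may fix an information set, write $C=\{(y,\Psi(y)):y\in\F_q^{k}\}$, pick the $V_{i}$ on the $k$ information coordinates with complete freedom, and then iterate on the $\F_p$-linear map $\Psi$ restricted to the resulting subspace, alternately harvesting its kernel and intersecting its image with a coordinate rectangle.)

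The heart of the matter -- and the step I expect to be the main obstacle -- is showing that this iteration actually survives $\Theta(\min\{t,1/\epsilon\})$ productive rounds rather than stalling immediately, as the generic count would predict; this requires an amortized/potential-function analysis of how $C$ interacts with the partially built rectangle, and it is here that the exponent $\tfrac{1}{2}\min\{t,1/\epsilon\}$ and the hypothesis $\epsilon\le\tfrac{1}{2}-\tfrac{1}{t}$ (i.e.\ rate at least $\tfrac{1}{2}$) get pinned down; I would import this part from~\cite{LS25}. Finally, the passage from list recovery up to capacity to the zero-error case is routine: at radius $0$ the ball $B_{0}(S)$ is precisely the combinatorial rectangle $S_{1}\times\cdots\times S_{n}$ and $h_{q,\ell}(0)=\log_q\ell=1/t$, so the capacity rate specializes to $1-\tfrac{1}{t}$, all estimates simplify, and the construction above goes through unchanged.
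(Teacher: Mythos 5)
There is a genuine gap, and the paper's proof shows that your two structural premises are both off. First, your reduction to $\F_p$-subspaces rests on the claim that $\ell^{t}=q=p^{r}$ forces $\ell$ to be a power of the characteristic $p$; but the theorem's footnote explicitly allows $t$ to be a non-integral real and $\ell$ to be an arbitrary integer $\ge 2$ (the only role of $t$ is that $1/t=\log_q\ell$ enters the rate), so the lists need not be subspaces or cosets, and the $\F_p$-linear framework does not cover the stated generality. Second, your assertion that ``there is no single global move, and the argument must be iterative'' is exactly what the paper refutes: its proof is a one-shot construction. Put a generator matrix of $C$ in reduced column-echelon form, split the $k$ systematic columns into $m$ blocks of size $k/m$, and observe that as long as $n-k<k/m$ each block's redundancy submatrix $A_j\in\F_q^{(n-k)\times(k/m)}$ has a nontrivial kernel, yielding nonzero codewords $h_1,\dots,h_m$ with pairwise disjoint supports. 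Taking all $\ell^{m}$ combinations $\sum_j\gamma_j h_j$ with each $\gamma_j$ ranging over an arbitrary $\ell$-element subset $\{\beta_1,\dots,\beta_\ell\}\subseteq\F_q$ gives $\ell^{m}$ distinct codewords whose $i$-th coordinates take at most $\ell$ values each (the scalar multiples $\beta_r h_j(i)$), hence all lie in a single combinatorial rectangle with lists of size $\ell$. The constraint $n-k<k/m$ translates to $m<\frac{1-1/t-\eps}{1/t+\eps}$, which is where both $\frac{1}{2}\min\{t,1/\eps\}$ and the hypothesis $\eps\le\frac{1}{2}-\frac{1}{t}$ come from; there is no potential function, no amortization, and no use of any subfield.

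Beyond the mismatch in approach, your proposal is incomplete on its own terms: you identify the survival of your iteration for $\Theta(\min\{t,1/\eps\})$ productive rounds as ``the heart of the matter'' and then defer it entirely to~\cite{LS25}. Since that is precisely the content of the theorem, the proposal as written does not constitute a proof. If you want to salvage your viewpoint, note that the disjointly supported $h_j$'s above do hand you an $m$-dimensional ``flag'' in one step; the key difference is that the resulting lists have the form $\{\beta_1 c,\dots,\beta_\ell c\}$ for a coordinate-dependent $c\in\F_q$ rather than being a fixed subspace, which is exactly why no hypothesis on $\ell$ dividing into the field structure is needed.
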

	
        It is worth noting that the claim of~\cref{thm:LR-outlist-lower-bound} is a more abstracted version motivated by the proof in the polynomial setting by~\cite{chen-zhang-2024-FRS-list-size}.  And of course, an answer to a question leads to another question!
    \begin{question}\label{que:LB-small-alphabet}
		Can the assertion in~\cref{thm:LR-outlist-lower-bound} be improved to \(L\ge\ell^{\,\Omega(\max\{t,1/\epsilon\})}\)?
	\end{question}
	
	\begin{proof}[Proof of{~\cref{thm:LR-outlist-lower-bound}}]
		Note that we assume \(n\ge n_0(t,\epsilon)\) is sufficiently large, and we will determine \(n_0\) at the end.  Let \(k=\dim_{\mb{F}_q}(C)=\big(1-\frac{1}{t}-\epsilon\big)n\).  By using elementary column operations, and permutations of rows, we can write a generator matrix \(G\in\mb{F}_q^{n\times k}\) of \(C\) in \emph{reduced column-echelon form} as
		\begin{align*}
			G&=\begin{bmatrix}
				G_1&G_2&\cdots&G_{k-1}&G_k
			\end{bmatrix}\\
			&\coloneqq\begin{bmatrix}
				1&&&\\
				&1&&\\
				&&\ddots&&\\
				&&&1&\\
				&&&&1\\
				\hline
				g_1&g_2&\cdots&g_{k-1}&g_k
			\end{bmatrix}
		\end{align*}
		for some \(g_1,\ldots,g_k\in\mb{F}_q^{(n-k)\times1}\).  Fix \(m\ge1\) (to be determined), and for every \(j\in[m]\), denote
		\begin{align*}
			A_j&=\begin{bmatrix}
				g_{(j-1)m+1}&\cdots&g_{jm}
			\end{bmatrix}\in\mb{F}_q^{(n-k)\times(k/m)},\\
			B_j&=\begin{bmatrix}
				G_{(j-1)m+1}&\cdots&G_{jm}
			\end{bmatrix}\in\mb{F}_q^{n\times(k/m)}.
		\end{align*}
		
		Suppose \(n-k<k/m\).  For every \(j\in[m]\), there exists a nonzero vector \(x^{(j)}\in\mb{F}_q^{(k/m)\times 1}\) such that \(A_jx^{(j)}=0^{n-k}\).  Define \(h_j\coloneqq G_jx^{(j)}\in\mb{F}_q^n\) for all \(j\in[m]\).  It follows that \(h_j\in C,\,h_j\ne0^n\), and \(\supp(h_j)\subseteq[(j-1)(k/m)+1,j(k/m)]\) for all \(j\in[m]\).
		
		Now take any distinct \(\beta_1,\ldots,\beta_\ell\in\mb{F}_q\), and consider the grid \(B^m\coloneqq\{\beta_1,\ldots,\beta_\ell\}^m\).  Define
		\[
		\mc{L}=\left\{\sum_{j=1}^m\gamma_jh_j:(\gamma_1,\ldots,\gamma_m)\in B^m\right\}\subseteq C.
		\]
		It is immediate that \(|\mc{L}|=\ell^{\,m}\).  Let \(S_i=\{\gamma_rh_j(i):\gamma_r\in B\}\) for all \(j\in[m]\) and \(i\in[(j-1)(k/m)+1,j(k/m)]\).  Also choose arbitrary \(S_i\in\binom{\mb{F}_q}{\ell}\) such that \(0\in S_i\), for all \(i\in[k+1,n]\).  So we have \(\mc{L}=S_1\times\cdots\times S_k\times\{0\}^{n-k}\subseteq C\cap(S_1\times\cdots\times S_n)\).
		
		All that remains now is to optimize \(m\ge1\) subject to the condition \(n-k<k/m\).  We have
		\begin{align*}
			&\n n-k<\frac{k}{m}\\
			\iff&\n\bigg(\frac{1}{t}+\epsilon\bigg)n<\frac{\big(1-\frac{1}{t}-\epsilon\big)n}{m}\\
			\iff&\n m<\frac{1-\frac{1}{t}-\epsilon}{\frac{1}{t}+\epsilon}.
		\end{align*}
		So we choose
		\[
		m=\bigg\lceil\frac{1}{\frac{1}{t}+\epsilon}\bigg\rceil\begin{cases}
			\ge\frac{t}{2}&\tx{if }\epsilon>0,\,t\le\frac{1}{\epsilon},\\[5pt]
			\ge\frac{1}{2\epsilon}&\tx{if }\epsilon>0,\,t>\frac{1}{\epsilon},\\[5pt]
			=t&\tx{if }\epsilon=0.
		\end{cases}
		\]
		This implies \(|\mc{L}|\ge\ell^{\,\frac{1}{2}\min\{t,1/\epsilon\}}\).
		
		To conclude, let us determine \(n_0\).  Assume the minimal case \(k_0=\big(1-\frac{1}{t}-\epsilon\big)n_0\).  Since we need \(k_0/m\ge1\), it is enough to ensure
		\[
		n_0\ge\frac{\frac{1}{\frac{1}{t}+\epsilon}}{1-\frac{1}{t}-\epsilon}=\frac{1}{\big(\frac{1}{t}+\epsilon\big)\big(1-\frac{1}{t}-\epsilon\big)}.
		\]
		Since we also have
		\[
		\frac{1}{\big(\frac{1}{t}+\epsilon\big)\big(1-\frac{1}{t}-\epsilon\big)}\le2\max\bigg\{t,\frac{1}{\epsilon}\bigg\},
		\]
		we choose \(n_0=n_0(t,\epsilon)=2\max\{t,\lceil1/\epsilon\rceil\}\), and this completes our proof.
	\end{proof}

    Now, the astute reader may have already noted that FRS codes are \emph{not technically linear}: the alphabet is in fact a vector space over the base field $\F_q$. They are, however, \emph{additive} over the base field $\F_q$. This motivates the following question of whether we can reconcile the linear and additive settings.

    \begin{question}
        Can the lower bound of Li and Shagrithaya~\cite{LS25} be generalized to \emph{additive} codes? 
    \end{question}
    
	\subsubsection*{Upper bounds}
	
    The best progress on upper bounds for output list size has also come from the recent work of~\cite{LS25}.  Previously, it was known via algorithmic list recovery that FRS codes and multiplicity codes having rate \(R\) can be list recovered up to capacity \(1-R-\epsilon\) (for sufficiently large folding), with output list size \((\ell/\epsilon)^{O(\log\ell/\epsilon)}\).  The key observation behind these results was that the output list size can always be captured inside a small-dimensional (of course, depending on \(\ell\)) subspace of the code.  The punchline of~\cite{LS25}, who adapted the Zyablov-Pinsker argument (outlined in~\cref{sec:ZP}), is that even for random linear codes, the output list for list recovery up to capacity is contained in a small-dimensional subspace, and therefore the output list size is \((\ell/\epsilon)^{O(\ell/\epsilon)}\).

	\subsubsection*{Random linear codes} 
	
	Let us now return to determining what is possible existentially for random linear codes. Namely, what is the ``typical'' list recoverability of a linear code? Recall that in light of \cite{LS25} we know that if we hope to list recover up to radius $1-R-\eps$ then the output list size must be at least $\ell^{\Omega(1/\eps)}$ (assuming $\eps$ is small enough). However, if one backs off \emph{slightly} more from capacity, then improved list sizes could be possible. 
	
	The first work making progress on this question is due to Rudra and Wootters~\cite{rudra-wootters-2018-list-recovery}. They provide a general framework for understanding list recovery of RLCs and, among other results, show the following.
	\begin{itemize}
		\item When $\rho$ is very close to $1-\ell/q$ -- viz, $\rho=1-\ell/q-\delta$ for sufficiently small $\delta>0$ -- and the rate is of the form $(1-\eps)\cdot \Rcor$, the list-size $L$ may be bounded by $L \leq q^{O(\log^2(\ell/\delta))}$. In brief: in the high-noise regime, with a \emph{multiplicative} gap to capacity, list sizes of size \emph{quasipolynomial} in $\ell$ suffice. 
		\item Consider now the ``high-rate, low noise'' regime. That is, the rate $R=1-\gamma$ (for sufficiently small $\gamma>0$). Then, for $\rho = \Omega(\gamma)$ one can show RLC's are with high probability $(\rho,\ell,L)$-list recoverable with $L = (q/\gamma)^{O(\log^2\ell/\gamma^3)}$. That is, one can have high rate list recoverable codes with near-optimal decoding radius $\rho$ and again quasipolynomial list size.  Earlier Guruswami~\cite{guruswami-2004-thesis} had shown that in this high-rate regime the list size $L$ may be bounded by $\ell^{O(\ell/\gamma^2)}$.
	\end{itemize}

    The argument of Rudra and Wootters proceeds by  considering bad sets of vectors (i.e., large subsets of list recovery balls) and argues that there are not so many of them that satisfy many linear constraints. They provide a recursive argument: namely, if there is some ``bad, low-dimensional'' message set, then they can (with high probability) find another smaller ``bad, low-dimensional'' message set. Iterating this argument enough times, they eventually find that the code contains whp a very small bad, low-dimensional set, which cannot exist (e.g., dimension less than logarithmic in the set size). Crucial to their argument is a useful proxy for dimension which is somewhat reminiscent of a moment generating function which allows for easier analytic control. 

    \paragraph{Rumble in the jungle: linearity v/s additivity.}  At this point, the astute reader has probably noticed that the list sizes guaranteed by Rudra and Wootters~\cite{rudra-wootters-2018-list-recovery} do not improve upon the list-size guaranteed for the \emph{explicit} FRS code! So, why bring up these results at all? Well, once again because FRS codes are additive, and not technically linear.

    Now, in many regards the property of additivity is just as useful as linearity; for example, either way one can efficiently encode using a generator matrix. However, for certain operations involving codes, such as tensoring, or in many ``expander-based'' constructions, such as the \cite{alon-edmonds-luby-1995-AEL} construction shared earlier, additivity appears somewhat naturally. At the same time, in such constructions the final output list size of the constructed code is inherited from the ``building block'' codes; thus, one should hope for the inner codes to have the smallest list size possible! Additionally for constructing \emph{quantum} error-correcting codes, linearity is a typically basic requirement. Thus, linearity and additivity are intimately tied together in several construction and proof strategies, and therefore, properly understanding what linear codes can and cannot achieve in the context of list recovery is vital. 
    
	This largely completes the story for list recovery of random linear codes. There are naturally many interesting questions remaining to be studied; we list a couple now.
	
	\begin{question}\label{que:L-poly-in-ell}
		\begin{enumerate}[(1)]
			\item For small $\gamma>0$, do there exist codes of rate $1-\gamma$ that are $(\Omega(\gamma),\ell,L)$ list recoverable with $L=\poly(\ell)$? Either from corruptions or errors? Or even for zero-error? 
			\item For a broader range of parameters, do there exist linear codes of rate $0.99\Rcor$ (say) that are $(\rho,\ell,L)$-list recoverable with $L$ \emph{polynomial} in $\ell$? Similarly for erasures, or even the zero-error case? 
		\end{enumerate}
	\end{question}
	
	

	\section{List recovery beyond coding theory} \label{sec:beyond-coding}

    Now that we have sharpened our blade, let us see how we could use it.  With the bulk of our intended discussion done, we now move towards a brief look at the applicability of list recovery in a couple of ancillary areas.
	
	\subsection{Leakage resilience of secret-sharing schemes}
	
	Linear codes play important roles in many areas of cryptography. While there are many connections we could discuss, we will focus on (local) leakage-resilience of secret-sharing schemes, as it demonstrates an interesting connection to list recovery, and points towards challenges that current techniques cannot address. 
	
	Consider the following scenario: for some $1 \leq t< n$, a secret $s \in \F_q$ is chosen, and then we wish to ``share'' it among $n$ parties such that:
	\begin{itemize}
		\item Reconstruction: any collection of $(t+1)$ (or more) parties can reconstruct their secret by pooling their shares;
		\item Privacy: any collection of $t$ (or less) parties learn nothing about the secret.
	\end{itemize}
	
	The classic way (called \emph{Massey secret sharing}~\cite{M95}) to do this is to start with $C' \leq \F_q^{n+1}$ which is \emph{maximum distance separable (MDS)} of dimension $t+1$. To share a secret $s$, sample $s_1,\dots,s_n$ uniformly at random subject to $(s,s_1,\dots,s_n) \in C'$, and then give party $i$ the share $s_i$. Briefly, reconstruction uses the fact that every subset of $[n+1]$ size at $t+1$ is an information set for $C$, while privacy uses the fact that the dual of an MDS code is also MDS. The special case of this construction with $C'$ being a Reed-Solomon code yields the celebrated Shamir secret-sharing scheme~\cite{S79}. 
	
	Now, the above setting implicitly assumes that we have an adversary that can control a \emph{strict} subset of the parties, and that the adversary obtains their shares exactly. However, there are many practical attacks that exploit hardware \emph{side-channels}, such as reading power consumption, cache-access patterns, time used, etc. Such attacks can be performed against even trusted parties; however, they are typically unable to fully recover a party's share, but rather some information about it (say, a few bits). 
	
	In light of these threats, cryptographers have begun to define abstract models of \emph{leakage-resilient cryptography}~\cite{KR19}, with a particularly well-studied topic being \emph{leakage-resilient secret-sharing schemes}. As a simple case (called \emph{local} leakage), we consider the following model: the adversary chooses leakage functions $g_i:\F_q\to\{0,1\}$ for each share $i\in[n]$, and learns the bit $g_i(s_i)$ (where party $i$ receives share $s_i$). 
	
	We would like to argue that the adversary is unable to learn anything about the share $s$ from these leakages. One way to argue this is to prove that the distribution of $(b_1,\dots,b_n)$ is essentially uniform over $\{0,1\}^n$. 
	
	Now, what does all this have to do with list recovery? Note that for $(g_1(s_1),\dots,g_n(s_n))$ to be close to uniform, we need that for each $(b_1,\dots,b_n) \in \{0,1\}^n$, $\Pr[(g_1(s_1),\dots,g_n(s_n))=(b_1,\dots,b_n)] \approx 2^{-n}$. Observe that, if we call $S_i := g_i^{-1}(b_i) \subseteq \F_q$, we have
	\begin{align*}
		&\Pr[(g_1(s_1),\dots,g_n(s_n))=(b_1,\dots,b_n)] \\
		&\qquad\qquad= \frac{|C \cap (S_1\times \cdots \times S_n)|}{q^k}
	\end{align*}
	where we have defined the code $C:=\{(s_1,\dots,s_n):(0,s_1,\dots,s_n) \in C'\}$.\footnote{It is a standard argument that it suffices to consider shares of the secret $0$.} Note that the numerator is precisely the quantity one must bound to demonstrate $C$ is zero-error list-recoverable! While \emph{a priori} our task seems stronger in the sense that we now need a two-sided bound, often at the cost of an affordable degradation in parameters just proving an upper bound suffices. The main conceptual change is that we are now explicitly hoping to list recover \emph{above capacity}; the output list-size $L$ will be $q^{\Theta(n)}$, and the goal is to have $L$ always be roughly what one would expect, namely, $\frac{q^k\ell^n}{q^n}$.
	
	Recall now that we have certain impossibility results for linear codes over zero-error list recovery over small characteristic fields. And indeed, Guruswami and Wootters~\cite{GW16}\footnote{They're motivation was in fact the construction of a so-called local repair scheme, but along the way they essentially gave a successful local leakage attack on any linear secret sharing scheme over characteristic 2 fields.} showed that over characteristic 2 fields, leaking a single bit from each of the shares is enough to recover a bit of the secret. That is, such leakage-resilient secret sharing schemes cannot exist. 
	
	Thus, the question that recent works~\cite{BDIR18,MNPS21,MNPW22,KK23} have addressed is whether or not one can have leakage-resilient schemes over prime fields. The state-of-the-art shows that RLCs of rate $R>1/2$ are leakage-resilient (and indeed, at its core is a certain argument establishing list recovery of RLCs, albeit in a nonstandard regime). For random RS codes, the best achievable rate is currently $\approx 0.69$. (Some other works manage to rule out large families of leakages for codes of smaller rates, but none rule out \emph{all} attacks for codes of smaller rate.) It is known the rate $R$ must satisfy $\frac{R}{1-R} > \frac{2}{\log_2 q}$~\cite{NS20}; the general consensus is that leakage-resilience should at least be possible for any positive rate $R>0$ for large enough $q$ (and perhaps the lower bound is even tight). 

    \begin{question} \label{ques:secret-sharing}
        Is there a linear secret sharing scheme of rate $R \leq 1/2$ that is leakage-resilient? Perhaps this holds for any $R>0$ (assuming $q$ large enough)?

        For RS codes, can you get rate smaller than $0.69$?
    \end{question}
	
	It is worth discussing where the challenge arises in proving such a leakage resilience result. Consider the case where each leakage function $g_i$ is roughly balanced, so each $S_i = g_i^{-1}(b_i)$ is roughly of size $q/2$. That is, the input list size $\ell \approx q/2$. Recall in the proof of \Cref{thm:LR-capacity-theorem} that we took a union bound over all possible tuples $(S_1,\dots,S_n) \in \binom{\F_q}{\ell}^n$, which we bounded crudely by $\ell^n$. When $\ell\approx q/2$, the best bound we can give is now $2^{qn}$; since must think of $q>n$ here (recall we would like $C$ to be MDS), this is typically prohibitively large. Thus, prior works have needed to find ingenious arguments to outperform this simple union bound.

    \subsection{Pseudorandomness}

        To cap things off, we return to the glimpse of pseudorandomness that we had in the warmup~\cref{sec:AEL}.  We had seen how expander graphs can be used as lifting tools to improve the rate v/s distance tradeoff of concatenated codes.  \emph{Lo and behold, expander graphs are equivalent to list recoverable codes!}

        To understand this beautiful connection, we need to consider a version of list recovery that is easier to handle than classic list recovery -- we replace combinatorial rectangles with restricted subsets.  Let \(C\subseteq\Sigma^n\) be a code.  Given radius \(\rho\in(0,1)\), and integers \(\ell,L\ge1\), we say \(C\) is \tsf{\((\rho,\ell,L)_{\ell^1}\)-list recoverable} if we have the intersection size \(|B(\rho,S)\cap C|\le L\), for all combinatorial rectangles \(S=S_1\times\cdots\times S_n\) satisfying \(|S_1|+\cdots+|S_n|\le\ell\).\footnote{This version of list recovery does not have a standard name in the literature; in fact, it is usually just called list recovery, with the distinction being obvious from the context.  Since the combinatorial rectangles are restricted by an \(\ell^1\)-restriction on their size, rather than an \(\ell^\infty\)-restriction, we call this \(\ell^1\)-list recovery in this survey.}  Notice that we may have some \(S_i=\emptyset\), in which case we simply replace it to \(S_i=\{\perp\}\) and consider an erasure at that location.  (So, strictly, speaking, we are allowing list recovery from erasures.)

        \paragraph*{And they're back: folded codes \`a la additive codes.}  Once again, we will nee to consider additive codes.  We present here another interpretation in terms of folding, since it is equally widespread -- in fact, the standard polynomial code families are defined in terms of folding rather than additivity.  Folded codes allow for bunching of codeword symbols, and the distances are measured in a coarser sense.  Formally, a folded code is given by an encoding map \(C:\Sigma^k\to(\Sigma^s)^n\), where \(s\) is the \emph{folding parameter}, the Hamming distance is now measured with respect to the alphabet \(\Sigma^s\), the message length \(k\le sn\), and the length of the code is still \(n\).

        This following connection between expander graphs and list recoveable codes was noted in~\cite{guruswami-umans-vadhan-2009-unbalanced-expanders}, and their work contains a lot more; we only provide a motivating glance.  In a slight departure from the definition in~\cref{sec:AEL}, we will now consider bipartite graphs with \emph{vertex expansion}.  Let \(G=(U,V,E)\) be a bipartite graph.  (We don't assume \(|U|=|V|\).)  For \(K\ge1,\,\mu>0\), we say \(G\) is a \tsf{\((K,\mu)\)-vertex expander} if for every subset \(A\subseteq U\) with \(|A|\le K\), we have \(|\tx{\normalfont Nbr}(A)|\ge\mu|A|\).
        \begin{theorem}[{\cite[Lemma 3.1]{guruswami-umans-vadhan-2009-unbalanced-expanders}}]\label{thm:expander-iff-LR}
            Let \(C:\Sigma^k\to(\Sigma^s)^n\) be a folded code.  Define a bipartite graph \(G_C=(\Sigma^k\times[n],[n]\times\Sigma^s,E)\) by
            \[
            \big\{(a,i),(j,b)\big\}\in E\n\iff\n j=i,\,C(a)_i=b.
            \]
            Then \(C\) is \((0,\ell,L)_{\ell^1}\)-list recoverable if and only if \(G_C\) is a \((L,\ell/L)\)-vertex expander.
        \end{theorem}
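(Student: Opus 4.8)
The plan is to translate both sides of the claimed equivalence into a single combinatorial statement about the graph $G_C$, after which the two implications fall out as contrapositives of one another. First I would set up a dictionary between list-tuples and right-vertex sets. To a tuple $S=(S_1,\dots,S_n)$ with $S_i\subseteq\Sigma^s$ and $\ell^1$-size $|S_1|+\dots+|S_n|\le\ell$, associate the right-vertex set $W_S:=\{(i,b):b\in S_i\}\subseteq[n]\times\Sigma^s$; then $|W_S|$ is exactly the $\ell^1$-size of $S$, erased coordinates ($S_i=\emptyset$) simply contribute nothing, and conversely every $W\subseteq[n]\times\Sigma^s$ is of this form. The whole point of the definition of $G_C$ is that a message $a$ satisfies $C(a)_i\in S_i$ at all non-erased coordinates precisely when, writing $A_a$ for the set of left vertices $\{(a,i):S_i\ne\emptyset\}$, one has $\mathrm{Nbr}(A_a)\subseteq W_S$. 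Thus (using that the encoding is injective) $|C\cap(S_1\times\dots\times S_n)|$ counts the messages whose associated left-vertex neighbourhood lands inside $W_S$, and $(0,\ell,L)_{\ell^1}$-list recoverability becomes the assertion that no right-vertex set of size $\le\ell$ ``captures'' more than $L$ such bundles.

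For the direction ``expander $\Rightarrow$ list recoverable'' I would argue by contraposition. If $C$ is not $(0,\ell,L)_{\ell^1}$-list recoverable there is a tuple $S$ of $\ell^1$-size $\le\ell$ and a set $M$ of more than $L$ messages all agreeing with $S$; bundling $M$ together with the non-erased coordinates produces a left-vertex set $A$ with $\mathrm{Nbr}(A)\subseteq W_S$, so $|\mathrm{Nbr}(A)|\le\ell$. A short pigeonhole step then passes to a sub-collection $A'\subseteq A$ with $|A'|\le L$ whose neighbourhood has shrunk below the allowed factor $\ell/L$: intuitively, since more than $L$ bundles are squeezed into a right set of size $\le\ell$, some right vertex is so heavily used that discarding it retains enough left vertices while strictly decreasing the neighbourhood. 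This $A'$ witnesses that $G_C$ is not an $(L,\ell/L)$-vertex expander.

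For the converse, ``list recoverable $\Rightarrow$ expander'', I again contrapose: given $A$ with $|A|\le L$ and $|\mathrm{Nbr}(A)|<(\ell/L)|A|$, average the inequality over the coordinates $i$ appearing in $A$ to locate a single coordinate $i_0$ on which the messages occurring in $A$ already collide enough that $S_{i_0}:=\{C(a)_{i_0}:(a,i_0)\in A\}$ is small. Taking $S$ to consist of $S_{i_0}$ at coordinate $i_0$ and erasures elsewhere yields a tuple of $\ell^1$-size $<\ell$, and one then checks that more than $L$ messages agree with $S$ — contradicting list recoverability. Throughout I would keep careful track of the factor $\mu=\ell/L$ and of the size threshold $L$, since these are exactly what the two reduction steps must preserve.

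I expect the main obstacle to be the parameter bookkeeping rather than any conceptual difficulty: reconciling ``$\le L$ messages in a list-recovery ball'' (which is really a statement about any $L+1$ messages failing to fit) with the expansion threshold ``$|A|\le L$ left vertices'', and ensuring the pigeonhole step (in the first direction) and the averaging/re-concentration step (in the second) lose nothing in the expansion ratio. Handling erased coordinates correctly — they show up as ``missing'' neighbours, and one must be sure they neither inflate $|\mathrm{Nbr}(A)|$ nor break the bundling — is the other place where care is needed. Once the dictionary of the first paragraph is in place, the equivalence ``small right set capturing many left vertices $\Leftrightarrow$ short list-tuple capturing many codewords'' is essentially a restatement, and this $\pm1$-level arithmetic is all that remains.
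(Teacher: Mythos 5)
Your first-paragraph dictionary (tuples $S\leftrightarrow$ right-vertex sets $W_S$ with $|W_S|=\sum_i|S_i|$, and ``codeword in the rectangle''\,$\leftrightarrow$\,``neighbourhood contained in $W_S$'') is exactly the correspondence the paper relies on; its own proof is nothing more than restating vertex expansion in this contrapositive ``list'' form and observing the two conditions coincide. So the approach is the right one, but one of your two implications contains a genuine error.

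The problem is the direction ``list recoverable $\Rightarrow$ expander.'' Starting from $A$ with $|A|\le L$ and $|\mathrm{Nbr}(A)|<(\ell/L)|A|$, you average over coordinates to find a single $i_0$ with $|S_{i_0}|<(\ell/L)|A_{i_0}|$, put $S_{i_0}$ at position $i_0$ and erasures elsewhere, and then assert that more than $L$ messages agree with this tuple. But the only codewords your construction actually places in that rectangle are the messages occurring in $A_{i_0}$, of which there are at most $|A_{i_0}|\le|A|\le L$ --- so no contradiction with $(0,\ell,L)_{\ell^1}$-list recoverability (which requires exhibiting \emph{more} than $L$) is obtained. Any further messages that happen to hit $S_{i_0}$ at coordinate $i_0$ have nothing to do with the expansion failure, so they cannot be what the proof hinges on. The averaging step is not merely unnecessary; it discards precisely the structure list recovery is about, namely that the messages of $A$ satisfy the constraints at \emph{all} coordinates simultaneously and hence sit together in one combinatorial rectangle. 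The correct move is the opposite of averaging: take the \emph{entire} neighbourhood $B=\mathrm{Nbr}(A)$ as the right-vertex set --- its size is already below $(\ell/L)|A|\le\ell$ by hypothesis --- let $S$ be the associated tuple, and observe that every message represented in $A$ lies in the resulting rectangle. Run at the threshold size (sets of $L+1$ messages versus neighbourhoods of size $\ell$, which is the form in which the paper and~\cite{guruswami-umans-vadhan-2009-unbalanced-expanders} state the contrapositive), this gives the violation directly; the residual $\pm1$ mismatch between ``$|A|\le L$'' and ``more than $L$ codewords'' is, as you correctly flag, already blurred in the statement itself. Your first direction, by contrast, is salvageable as written: concentrating on the most heavily loaded right vertex does produce a small non-expanding set, though the paper avoids this extra pigeonhole entirely by working with the threshold reformulation from the start.
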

        \begin{proof}
            Let us note the contrapositive condition to vertex expansion.  \(G_C\) is a \((L,\ell/L)\)-vertex expander if and only if for any \(B\subseteq[n]\times\Sigma^s\) and \(A\subseteq\Sigma^k\times[n]\) such that \(\tx{\normalfont Nbr}(A)\subseteq B\), if \(|B|<\ell\) then \(|A|<L\).  Does this sound like list recovery?  (Yes!) Because it is.  The correspondence follows immediately.
        \end{proof}

        While~\cref{thm:expander-iff-LR} is elementary, it is a classic instance to note the stark departure in the parameters of list recovery from the coding theory setting.  Two notable departures are the following.
        \begin{itemize}
        \item  In coding theory, the length of the list recoverable code is growing, and the alphabet size is often smaller (at least when we consider the small alphabet regime).  When we consider vertex expanders, we would like to keep the length of the code as small as possible (hopefully, a constant), and an infinite family of expander graphs is obtained precisely due to a growing alphabet.

        \item  In coding theory, while we are definitely interested in optimal output list size relative to the input list size, even a polynomial loss in an explicit construction is not severe.  When we consider expander graphs, such a loss could be devastating since the expansion factor \(\ell/L\) could become negligible.
        \end{itemize}

        In fact,~\cref{thm:expander-iff-LR} is only a beginning. There are several connections between pseudorandomness and list recovery; to pique the reader's curiosity, we point to a few more -- different versions of list recovery have an essentially one-to-one correspondence between \emph{condensers} and \emph{strong extractors}.  We point the reader to~\cite{guruswami-umans-vadhan-2009-unbalanced-expanders} for the connection with condensers, and~\cite{TUZ01} for the connection with strong extractors.  While we do not delve on these objects here, and we do not even define them, see~\cref{fig:LR-bijections} for a hint of the correspondences.
	\begin{figure*}[h]
		\centering
		\includegraphics{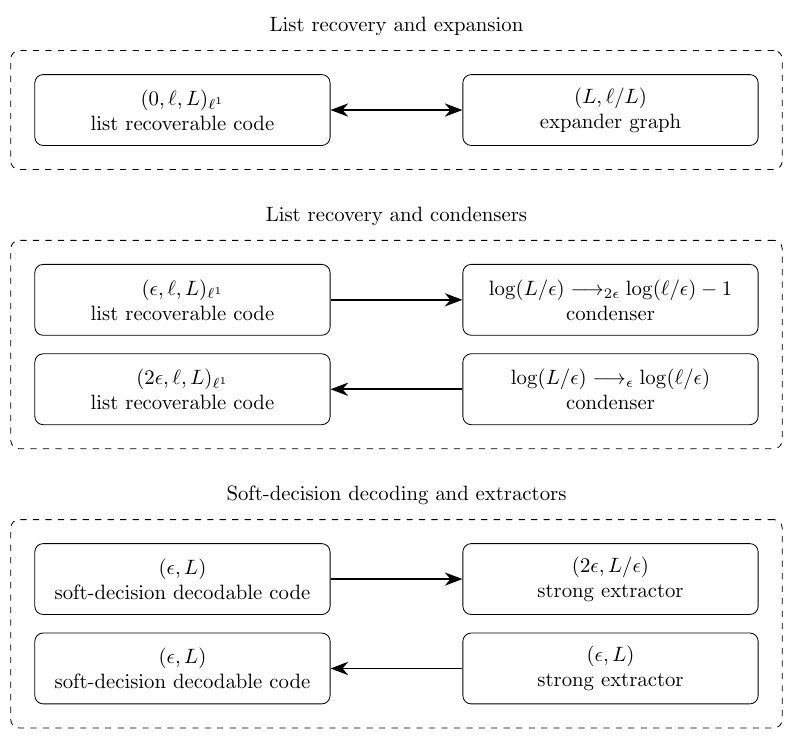}
		\caption{Correspondences between list recovery and expanders, condensers and extractors.  \textbf{Left:} The correspondence in~\cref{thm:expander-iff-LR}. \textbf{Middle:} Another correspondence in~\cite{guruswami-umans-vadhan-2009-unbalanced-expanders}. \textbf{Bottom:} Yet another correspondence in~\cite{TUZ01}.}\label{fig:LR-bijections}
	\end{figure*}
    
	\section{The unknown: open questions, incomplete answers, and more}

    While there has been great progress in recent years, a lot remains to be known.  We conclude this survey with a brief peek into the unknown.  We have mentioned several questions throughout this survey; we collect them here, and add a few more.
    \begin{enumerate}[(1)]
        \item  \emph{Better list recoverable AEL codes:}\quad We saw in~\cref{sec:AEL} that the AEL code construction can be used to obtain capacity achieving list decodable codes.  However, a simple-minded extension would yield fairly poor capacity achieving list recoverable codes, specifically with very large alphabet size.  Could this be remedied with some additional ideas?

        \item  (\cref{que:LB-small-alphabet}) \emph{Stronger lower bounds for linear codes over small fields:}\quad Can we establish an \(\ell^{\,\Omega(1/\epsilon)}\) lower bound on output list size for zero-error list recovery up to capacity of linear codes over small fields?

        \item (\cref{ques:list-size-small-alpha}) \emph{Stronger upper bounds for linear codes over small fields:} \quad Can we give better dependence of $L$ on $\ell$ for list-recovery from corruptions of linear codes close to capacity?

        \item  (\cref{que:L-poly-in-ell}) \emph{\(L=\ell^{\,O(1)}\):}\quad For large fields, if we consider the slightly weaker requirement of getting \emph{multiplicatively close} to list recovery capacity (with the underlying constant being arbitrarily close to 1), do there exist codes with the output list size being a fixed degree polynomial in the input list size?

        \item  (\cref{que:capacity-theorem}) \emph{Explicit codes close to the capacity theorem:}\quad  We are still not close to achieving the guarantees of the list recovery capacity theorem.  Find an explicit infinite family of codes (with increasing length) that achieves the guarantees of the capacity theorem.

        \item (\cref{ques:secret-sharing}) Can one prove that secret-sharing of lower rate ($<1/2$ for general linear codes, $<0.69$ for RS codes) are leakage-resilient. 

        \item  \emph{Beating David is tougher than beating Goliath:}\quad  Note that the guarantees of the capacity theorem kick in as soon as the length is a large enough constant -- in the proof of~\cref{thm:LR-capacity-theorem} that we present, we only need \(n\ge 9/\epsilon^2\), for a constant \(\epsilon>0\).  A terrific regime for further exploration is what we would like to call \emph{short asymptotics} -- assume \(\epsilon\) is a parameter approaching 0, and assume \(n=(1/\epsilon)^{O(1)}\).  Find explicit \emph{short length} codes that achieve the guarantees of the capacity theorem.  This regime seems more difficult to tackle, but could have more applications beyond coding theory.
    \end{enumerate}
    
    \emph{To infinity and beyond!}
    
	\section*{About the Authors}
	
	\hspace*{5mm}\tsf{Nicolas Resch} is an assistant professor with the Theoretical Computer Science Group from the Informatics Institute of the University of Amsterdam (UvA).  While he has broad interests in much of theoretical computer science, most of his work focuses on coding theory, cryptography, and their intersection.  Prior to joining the UvA, he was a postdoc in the Cryptology Group at the Centrum Wiskunde \& Informatica (CWI), hosted by Ronald Cramer. He obtained his Ph.D. from Carnegie Mellon University (CMU), and was advised by Venkatesan Guruswami and Bernhard Haeupler.
	
	\tsf{S. Venkitesh} is a postdoc at the Blavatnik School of Computer Science, Tel Aviv University, hosted by Amnon Ta-Shma.  He was earlier a postdoc at University of Haifa and IIT Bombay.  He obtained his Ph.D. from IIT Bombay, and was advised by Srikanth Srinivasan.  His research interests include error-correcting codes, pseudorandomness, algebraic methods in combinatorics, and Boolean function and circuit complexity.
	
	\printbibliography
	

\end{document}